\def\n{\noindent}
\def\pp{{\mathbf p}}
\def\cc{{\mathbf c}}
\def\mt{t\kern-0.035cm\char39\kern-0.03cm}
\def\ml{l\kern-0.035cm\char39\kern-0.03cm}
\def\md{d\kern-0.035cm\char39\kern-0.03cm}
\newtheorem{thm}{Theorem}
\newtheorem{lema}{Lemma}
\newtheorem{df}{Definition}
\newtheorem{cor}{Corollary}
\newenvironment{ex}{{\bf Example}}{\bigskip}
\def\MM{{\cal M}}
\def\sat{\small (2,2)\normalsize-{\sc e}\small 3\normalsize-{\sc sat}\ }
\def\satn{\small (2,2)\normalsize-{\sc e}\small 3\normalsize-{\sc sat}}
\begin{document}

\title{\bf Stable matchings of teachers to schools\thanks{This work was supported by VEGA grants 1/0344/14 and 1/0142/15 (Cechl\'arov\'a), by OTKA grant K108383 and the ELTE-MTA Egerv\'ary Research Group (Fleiner), by EPSRC grant EP/K010042/1 (Manlove) and by a SICSA Prize Studentship (McBride). The authors also gratefully acknowledge the support of COST Action IC1205 on Computational Social Choice.}}
\author{Katar\'\i na Cechl\'arov\'a$^1$,  Tam\'as Fleiner$^2$, David F.\ Manlove$^3$ and Iain McBride$^3$\\ \\
				\footnotesize{$^1$Institute of Mathematics,
				Faculty of Science, P.J. \v Saf\'arik University, }\\
			\footnotesize{Jesenn\'a 5, 040 01 Ko\v sice, Slovakia. Email {\tt katarina.cechlarova@upjs.sk}}\\
				 \footnotesize{$^2$Department of Computer Science and Information Theory,}\\ 
			 \footnotesize{Budapest University of Technology and Economics, Magyar tud\'osok k\"or\'utja 2, H-1117 Budapest,}\\
 	\footnotesize{Hungary and MTA-ELTE Egerv\'ary Research Group.  Email {\tt fleiner@cs.bme.hu}}\\
			\footnotesize{$^3$School of Computing Science,  Sir Alwyn Williams Building, University of Glasgow,}\\
\footnotesize{Glasgow, G12 8QQ, UK.  Email {\tt david.manlove@glasgow.ac.uk}, {\tt i.mcbride.1@research.gla.ac.uk}}
}
\date{ }
\maketitle
\begin{abstract} Several countries successfully use centralized matching schemes for school or higher education assignment, or for entry-level labour markets.  In this paper we explore the computational aspects of a possible similar scheme for assigning teachers to schools. Our model is motivated by a particular characteristic of the education system in many countries where each  teacher specializes in two subjects.  We seek stable matchings, which ensure that no teacher and school have the incentive to deviate from their assignments.  Indeed we propose two stability definitions depending on the precise format of schools' preferences.  If the schools' ranking of applicants is independent of their subjects of specialism, we show that the problem of deciding whether a stable matching exists is NP-complete, even if there are only three subjects, unless there are master lists of applicants or of schools. By contrast, if the schools may order applicants differently in each of their specialization subjects, the problem of deciding whether a stable matching exists is NP-complete even in the presence of subject-specific master lists plus a master list of schools. Finally, we prove a strong inapproximability result for the problem of finding a matching with the minimum number of blocking pairs with respect to both stability definitions.
\end{abstract}

{\bf Keywords:} stable matchings, serial dictatorship, NP-completeness, polynomial-time algorithm, inapproximability
\\

\section{Introduction}

In the organization of education, several countries or regions use various centralized schemes to allocate children to public schools (e.g., in Boston and New York \cite{APR05,APR05a}),  students to universities (e.g., in Hungary \cite{BFIM10}), and intending junior doctors to training positions in hospitals (e.g., in the USA \cite{ZZZ5}), etc. These schemes are usually not dictatorial in the sense that they take into account the wishes of both sides of the market: students  may express their preferences over the universities they wish to attend, and the universities may order their applicants based on some kind of evaluation. After analyzing several successful and unsuccessful schemes Roth \cite{Rot84, Rot08} convincingly argued that a crucial property for success is so-called {\it stability}, introduced in the seminal paper by Gale and Shapley \cite{GS62}. Stability means that no unmatched student--school pair should simultaneously prefer each other to their current assignee(s) (if any). In many real markets, each  instance not only admits a stable matching, but it is also possible to find such a matching efficiently.

However, sometimes there are circumstances  leading to additional structural requirements. For example, married couples may wish to be allocated to the same hospital or at least to hospitals that are geographically close \cite{MM10}, or schools may wish to have the right to close a study programme if the number of applicants does not meet a certain lower quota \cite{BFIM10}. In such cases, a suitable notion of stability has to be defined that really mirrors the intentions of the participants and motivates them to obey the recommended assignment. Alas, a stable matching is not necessarily bound to exist; and even worse, it is often a computationally difficult problem to decide whether in the given situation one does exist \cite{Ron90}.

The topic of this paper is motivated by the problems arising in  the labour market for teachers. Traditionally, a teacher for the upper elementary or lower secondary level of education in  Slovakia and the Czech republic (and in fact in many other countries and regions, such as Germany \cite{BAUM10} and Flanders \cite{VDVO03}) specializes in two curricular domains (from now on called {\it subjects}), e.g., Mathematics and Physics, Chemistry and Biology, or Slovak language and English etc. 
When a school is looking for new teachers, it may have different  numbers of lessons (or teaching hours) to cover at each subject. Thus we suppose that each school has different capacity for each subject and that it will be willing to employ a set of teachers in such a way that these capacities will not be exceeded.  Cechl\'arov\'a et al. \cite{CFMMP13} studied a variant of this problem where the tranee-teachers could only express which schools are acceptable for them and which not without ordering them according to their preferences and the schools had no say. In these settings, the aim was to assign as many trainee-teachers as possible (ideally all of them) by respecting the schools capacities.

The aim of this paper is to study algorithmic aspects of the problem of assigning teachers to schools within the framework of two-sided preferences. We suppose that teachers rank order their acceptable schools according to their own criteria, and vice versa, schools rank order their applicants similarly \cite{Man13}.  In this context we suggest two stability  definitions and study the computational complexity of problems concerned with finding stable matchings (or reporting that none exist).  These definitions and the associated complexity results depend on the nature of the schools' preference lists.

The main results and the organization of the rest of the paper are as follows. In Section \ref{s_defi} we introduce relevant technical concepts and illustrate them by means of simple examples. In Section \ref{s_linear} we deal with the case when each school has a linear ordering on the set of teachers who apply for a position. We show that in this general case the problem of deciding whether a stable matching exists is NP-complete, even if there are only three subjects in total. By contrast, if either the preferences of schools are derived from a common master list of teachers, or vice versa,  if the preferences of teachers are derived from a common master list of schools, a unique stable matching exists and it can be found using straightforward extensions of the classical serial Dictatorship mechanism \cite{RS90}.  Moreover, the problems with master lists are efficiently solvable without any restrictions on the number of subjects. In Section \ref{s_subject_specific}  we modify the stability definition to enable the schools to order the teachers differently according to their two specialization subjects. We show that in this case, the problem of deciding whether a stable matching exists is NP-complete even if there are only three subjects, there are master lists for each subject and there is also a master list of schools.  Problems involving finding matchings with the minimum number of blocking pairs are discussed in Section \ref{sec:moststable}.  Finally Section \ref{s_conclusion} summarizes our findings and suggests some questions for further research. 

\section{Preliminary definitions and observations}\label{s_defi}
An instance of the Teachers Assignment Problem, {\sc tap} for short, involves a set $A$ of applicants (teachers), a set $S$ of schools and a set $P$ of subjects. 
For ease of exposition, elements of the set $P$ will sometimes be referred to by letters like $M$, $F$ or  $I$ 
to remind the reader of real subjects taught at schools, such as Mathematics, Physics, or Informatics, etc.
 
Each applicant $a\in A$ is characterized by a pair of distinct subjects $\pp(a)\subseteq P$, where $\pp(a)=\{p_1(a),p_2(a)\}$, that define her {\em type}. Sometimes we shall also say that a particular applicant is of type $FM$, $IM$, or $FI$, etc.  Corresponding to each applicant $a\in A$ there is a set $S(a)\subseteq S$ of schools that $a$ finds \emph{acceptable}.  Moreover applicant $a$ ranks $S(a)$ in strict order of preference.

Each school $s\in S$ has a certain capacity for each subject: the vector of capacities will be denoted by $\cc(s)=(c_1(s),\dots, c_k(s))\in {\mathbb N}^k$, where $k=|P|$, and an entry of $\cc(s)$ will be called the {\em partial capacity} of school $s$. Here, $c_i(s)$ is the maximum number of applicants, whose specialization involves subject $p_i$, that school $s$ is able to take. 

Let $S(A)=\{(a,s) : a\in A\wedge s\in S(a)\}$ denote the set of \emph{acceptable} applicant--school pairs.  An {\em assignment} $\MM$ is a subset of $S(A)$ such that each applicant $a\in A$ is a member of at most one pair in $\MM$. We shall write $\MM(a)=s$ if $(a,s)\in \MM$ and say that applicant $a$ is assigned to school $s$ and $\MM(a)=\emptyset$ if there is no $s\in S$ with $(a,s)\in \MM$. The set of applicants assigned to a school $s$ will be denoted by $\MM(s)=\{a\in A : (a,s)\in \MM\}$. We shall also denote by $\MM_p(s)$ the set of applicants assigned to $s$ whose specialization includes subject $p$ and by $\MM_{p,r}(s)$ the set of applicants assigned to $s$ whose specialization is exactly the pair $\{p,r\}$.  More precisely,
$$\MM_p(s)=\{a\in A : (a,s)\in \MM \ \wedge\ p\in \pp(a)\}$$ and $$\MM_{p,r}(s)=\{a\in A : (a,s)\in \MM \  \wedge\  \{p,r\}= \pp(a)\}.$$
An assignment $\MM$ is a {\em matching} if $|\MM_p(s)|\leq c_p(s)$ for each school $s$ and each subject $p$.  We say that an applicant $a$ is {\em assigned} in $\MM$ if $\MM(a)\ne\emptyset$, otherwise she is {\em unassigned}. A school $s$ is {\em full} in a matching $\MM$ if it can admit no other student (irrespective of her specialization) and that it is {\em undersubscribed} in subject $p$ if $|\MM_p(s)|< c_p(s)$.

\begin{df}\label{d_blok} Let $\MM$ be a matching. We say  that a pair $(a,s)$  with $\pp(a)=\{p_1,p_2\}$ and $s\in S(a)$ is \emph{blocking} if $a$ is not assigned in $\MM$ or $a$ prefers $s$  to $\MM(a)$ and one of the following conditions hold:
\begin{itemize}\itemsep0pt
\item[$(i)$] $s$ is undersubscribed in both $p_1$ and $p_2$, 
\item[$(ii)$] $s$ is undersubscribed in $p_i$ and it prefers $a$ to one applicant in $\MM_{p_{3-i}}(s)$ for some $i\in\{1,2\}$, 
\item[$(iii)$] $s$ prefers $a$ to one applicant in $\MM_{p_1,p_2}(s)$,
\item[$(iv)$] $s$ prefers $a$ to two different applicants $a_1,a_2$ such that $a_1\in \MM_{p_1}(s)$ and $a_2\in \MM_{p_2}(s)$.
\end{itemize}
A matching is \emph{stable} if it  admits no blocking pair. 
\end{df}

\n
\begin{ex} {\bf 1.} Let $J_1$ be the instance of {\sc tap} with the set of subjects $P=\{F,I,M\}$ given in Figure \ref{f_1}.
\begin{figure}[ht]
\begin{displaymath}
\begin{array}{cllcll}
\mbox{applicant} &\mbox{type}\ &\mbox{preferences}\qquad \qquad & 
\mbox{school} & \mbox{capacities} &\mbox{preferences}\\
&&&& F\quad I\quad M \\ 
\noalign{\hrule} 
\\[-8pt]
a_1 & MF & s_1,s_3  & s_1 & 2\quad\ 1\quad\ 1 & a_3,a_4,a_1,a_2 \\
a_2 & MF & s_1,s_3  & s_2 & 1\quad\ 1\quad\ 1 & a_4,a_3         \\
a_3 & MI & s_1,s_2  & s_3 & 2\quad\ 1\quad\ 1 & a_4,a_1,a_2     \\
a_4 & IF & s_3,s_2,s_1   &                   &                  \\
\noalign{\hrule}
\end{array}
\end{displaymath}
\caption{Instance $J_1$ of {\sc tap}}\label{f_1}
\end{figure} 

\n Consider the matching $\MM=\{(a_1,s_3),(a_2,s_1),(a_3,s_2),(a_4,s_1)\}$. 
It is easy to see that $\MM$ is not stable. Each of the conditions $(i)-(iv)$ of Definition \ref{d_blok} is violated by the following blocking pairs, respectively:   
\begin{itemize}\itemsep0pt
\item[$(i)$] $(a_4,s_3)$ is a blocking pair since $a_4$ prefers school $s_3$ to $\MM(a_4)=s_1$, and $s_3$ is undersubscribed in both $I$ and $F$, 
\item[$(ii)$] $(a_4,s_2)$ is a blocking pair since $a_4$ prefers $s_2$ to $\MM(a_4)=s_1$, school $s_2$ is undersubscribed in $F$ and it prefers $a_4$ to $a_3\in \MM(s_2)$, 
\item[$(iii)$] $(a_1,s_1)$ is a blocking pair since $a_1$ prefers $s_1$ to $\MM(a_1)=s_3$, and school $s_1$ prefers $a_3$ to $a_2\in \MM(s_1)$ who is of the same type as $a_1$,
\item[$(iv)$] $(a_3,s_1)$ is a blocking pair since $a_3$ prefers $s_1$ to $\MM(a_1)=s_2$, and school $s_1$ prefers $a_2$ to both its asignees $a_2$ and $a_4$.
\end{itemize}
\label{ex_1}
\end{ex}

Note that Definition \ref{d_blok} $(iv)$, as also illustrated  in Example 1, 
gives rise to the possibility that a school could drop two applicants and accept just one in order to satisfy a blocking pair.  This is also a possibility in the Hospitals / Residents problem with Couples where a single resident $r$ can displace a couple $c$ assigned to a hospital $h$ if $h$ prefers $r$ to just one member of $c$ \cite{BIS11}.

We also remark that {\sc tap} bears a superficial resemblance to the variant of the Hospitals / Residents problem that modelled the problem of assigning junior doctors to hospitals in Scotland in years 2000-2005, where intending junior doctors sought not one position at hospitals, but two, namely a medical post and a surgical post \cite{Irv08}.  They also typically had preferences over the half-years in which they would carry out each type of post, so the stability definition was different to the one given in Definition \ref{d_blok}.

We next present two examples to show that a {\sc tap} instance need not admit a stable matching, and in such instances that do, stable matchings may have different sizes.
\bigskip

\noindent \begin{ex} {\bf 2.} Consider the instance $J_2$ of {\sc tap} given in Figure \ref{f_2}. We show that $J_2$ admits no stable matching. 
If $\MM(a_1)=s_2$ then assigning $a_2$ to $s_1$ leads to the blocking pair $(a_3,s_1)$ and
assigning $a_3$ to $s_1$ produces blocking pair $(a_2,s_2)$.
By contrast, if $\MM(a_1)=s_1$ then $a_2$ must also be assigned to $s_1$ (this schools is her first choice and it has enough place to accept her), which makes the pair $(a_1,s_2)$ blocking. 
Hence no stable matching exists.
\begin{figure}[ht]
\begin{displaymath}
\begin{array}{rllrll}
\mbox{applicant} &\mbox{type}\ &\mbox{preferences}\qquad \qquad & 
\mbox{school} & \mbox{capacities} &\mbox{preferences}\\
&&&& F\quad I\quad M \\ 
\noalign{\hrule} 
\\[-8pt]
a_1 & FM & s_2,s_1  & s_1 & 1\quad\ 1\quad\ 2 & a_1,a_3,a_2 \\
a_2 & IM & s_1,s_2  & s_2 & 1\quad\ 1\quad\ 1 & a_2, a_1    \\
a_3 & FI & s_1      &     &  \quad\  \quad\   &             \\
\noalign{\hrule}
\end{array}
\end{displaymath}
\caption{An instance $J_2$ of {\sc tap} with no stable matching}\label{f_2}
\end{figure} 
\end{ex}

\bigskip
\noindent \begin{ex} {\bf 3.}  Consider the instance $J_3$ of {\sc tap} given in Figure \ref{f_3}.  It is straightforward to verify that $\MM_1=\{(a_1,s_2),(a_2,s_1)\}$, of size 2, and $\MM_2=\{(a_1,s_1),(a_2,s_2),(a_3,s_1)\}$, of size 3, are both stable in $J_3$.  Hence $J_3$ admits stable matchings of different sizes.
\begin{figure}[ht]
\begin{displaymath}
\begin{array}{rllrll}
\mbox{applicant} &\mbox{type}\ &\mbox{preferences}\qquad \qquad & 
\mbox{school} & \mbox{capacities} &\mbox{preferences}\\
&&&& F\quad I\quad M \\ 
\noalign{\hrule} 
\\[-8pt]
a_1 & FM & s_2,s_1  & s_1 & 1\quad\ 1\quad\ 2 & a_1,a_2,a_3 \\
a_2 & FI & s_1,s_2  & s_2 & 1\quad\ 1\quad\ 1 & a_2, a_1    \\
a_3 & IM & s_1      &     &  \quad\  \quad\   &             \\
\noalign{\hrule}
\end{array}
\end{displaymath}
\caption{An instance $J_3$ of {\sc tap} with stable matchings of different sizes}\label{f_3}
\end{figure} 
\end{ex}

\section{NP-hardness of TAP}\label{s_linear}
\begin{thm}\label{thm1}
Given an instance of {\sc tap}, the problem of deciding whether a stable matching exists, is NP-complete. This result holds even if there are at most three subjects, each partial capacity of a school is at most 2, and the preference list of each applicant is of length at most 3.
\end{thm}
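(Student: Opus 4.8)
Membership in NP is immediate: a matching $\MM$ is a polynomial-size certificate, and one can check in polynomial time that $|\MM_p(s)|\le c_p(s)$ for every school $s$ and subject $p$, and that none of the polynomially many acceptable pairs in $S(A)$ is blocking under conditions $(i)$--$(iv)$ of Definition~\ref{d_blok}.

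For NP-hardness I would reduce from \satn, the restriction of {\sc 3-sat} in which every clause contains exactly three literals and every variable occurs in exactly two clauses unnegated and two clauses negated (a known NP-complete problem). Given such a formula $\varphi$ with variables $x_1,\dots,x_n$ and clauses $C_1,\dots,C_m$, I would build a {\sc tap} instance over just the three subjects $F$, $I$, $M$, consisting of a \emph{variable gadget} for each $x_i$ and a \emph{clause gadget} for each $C_j$, the two kinds of gadget sharing a small number of schools that correspond to the literal occurrences; the goal is to show that $\varphi$ is satisfiable if and only if the constructed instance admits a stable matching.

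The variable gadget is the crux, and I would build it around the ``no stable matching'' pattern of the instance $J_2$ in Figure~\ref{f_2}: a short cyclic conflict among a few applicants and schools that by itself has no stable sub-matching, but which, once equipped with two alternative ``escape'' routes, becomes stabilizable in exactly two ways, to be read as $x_i=\text{true}$ and $x_i=\text{false}$. In the true state a designated applicant is routed towards the clause gadgets of the clauses containing $x_i$ positively, and in the false state towards those containing $\overline{x_i}$. Because each variable has exactly two positive and two negative occurrences, each route has to carry at most two applicants, which is what keeps every partial capacity at most $2$; splitting the propagation among several applicants, each with a preference list of length at most $3$, keeps all applicant lists short. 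The clause gadget for $C_j$ would be arranged so that its own internal instability can be absorbed precisely when at least one of its three incoming literal routes is active, i.e.\ when $C_j$ contains a true literal under the chosen assignment.

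Correctness has the two usual directions. From a satisfying assignment one sets each variable gadget to the matching state, routes applicants accordingly, completes each clause gadget through one satisfied literal, and checks case by case that no pair of any of the four types in Definition~\ref{d_blok} blocks. Conversely, from a stable matching $\MM$ one argues that $\MM$ must place each variable gadget into one of its two legitimate states --- any other local configuration reproduces the cyclic conflict and yields a blocking pair --- and that the consistency forced through the shared schools compels every clause gadget to receive an active route; reading off the states then produces a satisfying assignment. The main obstacle I anticipate is reconciling the three numerical restrictions (three subjects, partial capacities $\le 2$, applicant lists of length $\le 3$) with the correct interlocking of the gadgets: with only three subject types available, the wiring between gadgets must reuse the same types repeatedly, so the bulk of the argument is the type-$(i)$--$(iv)$ case analysis verifying that no unintended blocking pair arises between applicants or schools belonging to different gadgets.
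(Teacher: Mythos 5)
Your plan has the right skeleton and in fact matches the paper's architecture at a high level: the paper also reduces from \satn, also builds a variable gadget around a short cyclic conflict in the spirit of instance $J_2$ of Figure~\ref{f_2} (applicants $a_i^1,\dots,a_i^6$ and schools $s_i^1,\dots,s_i^4,s_i^T,s_i^F$, with Lemma~\ref{l1} showing each stable matching selects one of exactly two states $T_i$ or $F_i$), and also wires variables to clauses through schools indexed by literal occurrences. But as a proof it has two genuine gaps, and they are precisely the places where the real work lies. First, nothing in your plan prevents a literal-occurrence applicant from simply ending up unassigned (or parked at some harmless school) when both her variable-side and clause-side options are blocked; if that were possible, an unsatisfiable formula could still admit a stable matching and the reverse direction would fail. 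The paper handles this with a dedicated enforcement mechanism: each $x_i^k,y_i^k$ gets a third-choice school $w_{i,3}^k$ attached to a small ``no stable matching'' gadget (the applicants $q_{i,1}^k,q_{i,2}^k,q_{i,3}^k$ and schools $w_{i,1}^k,w_{i,2}^k,w_{i,3}^k$), and Lemma~\ref{l2} then shows every literal applicant must be assigned to her first- or second-choice school in any stable matching. Some device of this kind is indispensable, and your proposal contains no counterpart to it.

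Second, your clause gadget is both unspecified and harder than necessary. You propose routing the \emph{true}-literal applicants into the clause and building a clause gadget whose ``internal instability can be absorbed precisely when at least one incoming route is active''; constructing such an ``at least one'' gadget within the stated restrictions (three subjects, partial capacities at most $2$, applicant lists of length at most $3$) is nontrivial and you give no construction. The paper inverts the polarity and thereby avoids needing any clause-side instability at all: the applicants corresponding to \emph{false} literal occurrences are the ones sent to the clause school $z_j$, every literal applicant has subject $M$ in her type, and $z_j$ has partial capacity $2$ in each subject, so $z_j$ can absorb at most two such applicants; combined with Lemma~\ref{l2}, all three literals of $c_j$ being false would force three $M$-applicants into $z_j$, a contradiction. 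Until you exhibit concrete gadgets achieving both of these functions (the forcing of literal applicants to their top two choices, and a clause mechanism that is violated exactly when all three literals are false) and carry out the cross-gadget blocking-pair analysis you yourself flag as the main obstacle, the argument is an outline rather than a proof.
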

\begin{proof} It is easy to  see that {\sc tap} belongs to NP, since when given an assignment, it can be checked in polynomial time that it is a matching and that it is stable. To prove completeness,  we reduce from a  restricted version of {\sc sat}. Let \sat denote the problem of deciding, given a Boolean formula $B$ in CNF in which each clause contains exactly 3 literals and, for each variable $v_i$, each of literals $v_i$ and $\bar{v}_i$ appears exactly twice in $B$, whether $B$ is satisfiable. Berman et al. \cite{BKS03} showed that \sat is NP-complete.

Hence let $B$ be an instance of \satn. Let $V = \{v_1, v_2,\dots, v_n\}$ and $C =\{c_1, c_2,\dots, c_m\}$ be the set of variables and clauses  in $B$, respectively. Let us construct an instance $J$ of {\sc tap} in the following way.

There are 3 subjects, namely $F$, $I$ and $M$. For each variable $v_i$ there are 6 applicants $a_i^1, a_i^2,\dots, a_i^6$, 4 applicants $x_i^1,x_i^2,y_i^1,y_i^2$, 12 applicants $q_{i,1}^k,q_{i,2}^k,q_{i,3}^k$ ($1\leq k\leq 4$), 6 schools $s_i^1,s_i^2,s_i^3,s_i^4, s_i^T,s_i^F$ and 12 schools $w_{i,1}^k,w_{i,2}^k,w_{i,3}^k$ ($1\leq k\leq 4$). For each clause $c_j$ there is one school $z_j$. Applicants $x_i^1$ and $x_i^2$ correspond to the first and to the second occurrence of literal $v_i$, and applicants $y_i^1$ and $y_i^2$ correspond to the first and to the second occurrence of literal $\bar{v}_i$, respectively.    

The characteristics of applicants and schools and their preferences are given in Figure \ref{t_1}. 
\begin{figure}[ht]
\begin{displaymath}
\begin{array}{rllrll}
\mbox{applicant} &\mbox{type}\ &\mbox{preferences}\qquad \qquad & 
\mbox{school} & \mbox{capacities} &\mbox{preferences}\\
&&&& F\quad I\quad M \\ 
\noalign{\hrule} 
\\[-8pt]
a_i^1 & FI & s_i^1,s_i^3 & s_i^1 & 1\quad\ 1\quad\ 2 & a_i^4,a_i^1,a_i^3 \\[1.5mm]
a_i^2 & FI & s_i^2,s_i^4 & s_i^2 & 1\quad\ 1\quad\ 2 & a_i^3,a_i^2,a_i^4 \\[1.5mm]
a_i^3 & FM & s_i^1,s_i^2 & s_i^3 & 1\quad\ 1\quad\ 0 & a_i^1,a_i^5       \\[1.5mm]
a_i^4 & IM & s_i^2,s_i^1 & s_i^4 & 1\quad\ 1\quad\ 0 & a_i^2,a_i^6       \\[1.5mm]
a_i^5 & FI & s_i^3,s_i^T & s_i^T & 1\quad\ 1\quad\ 2 & a_i^5,x_i^1,x_i^2 \\[1.5mm]
a_i^6 & FI & s_i^4,s_i^F & s_i^F & 1\quad\ 1\quad\ 2 & a_i^6,y_i^1,y_i^2 \\[1.5mm]
x_i^1 & FM & s_i^T,c(x_i^1),w_{i,3}^1 & w_{i,1}^k & 1\quad\ 1\quad\ 2 & q_{i,1}^k,q_{i,3}^k,q_{i,2}^k \\[1.5mm]
x_i^2 & IM & s_i^T,c(x_i^2),w_{i,3}^2 & w_{i,2}^k & 1\quad\ 1\quad\ 2 & q_{i,2}^k,q_{i,1}^k \\[1.5mm]
y_i^1 & FM & s_i^F,c(y_i^1),w_{i,3}^3 & w_{i,3}^k & 1\quad\ 1\quad\ 1 & A(w_{i,3}^k),q_{i,3}^k \\[1.5mm]
y_i^2 & IM & s_i^F,c(y_i^2),w_{i,3}^4 & z_j  & 2\quad\ 2\quad\ 2 & v_j^1,v_j^2,v_j^3 \\[1.5mm]
q_{i,1}^k & FM & w_{i,2}^k,w_{i,1}^k & & &  \\[1.5mm]
q_{i,2}^k & IM & w_{i,1}^k,w_{i,2}^k & & &  \\[1.5mm]
q_{i,3}^k & FI & w_{i,3}^k,w_{i,1}^k & & & \\[1.5mm]
\noalign{\hrule}
\end{array}
\end{displaymath}
\caption{The {\sc tap} instance constructed in the proof of Theorem \ref{thm1}.}\label{t_1}
\end{figure} 
Here, the subscripts and superscripts involving $i$, $j$ and $k$ range over the following intervals: $1\leq i\leq n$, $1\leq j\leq m$ and $1\leq k\leq 4$.  In the preference list of school $z_j$, the symbol $v_j^s$ means the $x$- or $y$-applicant that corresponds to the literal that appears in position $s$ of clause $c_j$. Conversely, in the preference list of $x$- or $y$-applicants the symbol $c(.)$ denotes the $z$-school corresponding to the clause containing the corresponding literal.  Also, in the preference list of $w_{i,3}^k$, the symbol $A(w_{i,3}^k)$ denotes $x_i^k$ if $1\leq k\leq 2$ and denotes $y_i^{k-2}$ if $3\leq k\leq 4$.

For each $i$ ($1\leq i\leq n$) let us denote
$$T_i=\{(x_i^1,s_i^T), (x_i^2,s_i^T), (a_i^6, s_i^F)\}, \qquad F_i=\{(y_i^1,s_i^F), (y_i^2,s_i^F), (a_i^5, s_i^T)\}.$$

Now, let $f$ be a satisfying truth assignment of $B$. Define a matching $\MM$ in $J$ as follows.
For each variable $v_i\in V$, if $v_i$ is true under $f$, put the pairs $T_i$ into  $\MM$ and  if $v_i$ is false under $f$ put the pairs $F_i$ into  $\MM$. In the former case add the pairs 
$$(y_i^1,c(y_i^1)), (y_i^2,c(y_i^2)), (a_i^1,s_i^1), (a_i^2,s_i^4), (a_i^3,s_i^2), (a_i^4,s_i^2), (a_i^5,s_i^3),$$ 
and in the latter case add the pairs
$$(x_i^1,c(x_i^1)), (x_i^2,c(x_i^2)), (a_i^1,s_i^3), (a_i^2,s_i^2), (a_i^3,s_i^1), (a_i^4,s_i^1), (a_i^6,s_i^4).$$ 
Notice that as each clause  $c_j\in C$ contains at most two false literals, school $z_j$ has enough capacity for accepting all the allocated applicants.  Finally, add the following pairs for each $i$ ($1\leq i\leq n$) and $k$ ($1\leq k\leq 4$):
$$(q_{i,1}^k,w_{i,2}^k), (q_{i,2}^k,w_{i,1}^k), (q_{i,3}^k,w_{i,3}^k).$$

It is obvious that the defined assignment is a matching; it remains to prove that it is stable. We show this by considering each type of applicants corresponding to variable $v_i$ in turn. Firstly we remark that applicants $q_{i,1}^k,q_{i,2}^k,q_{i,3}^k$ each have their first choice school $(1\leq k\leq 4)$ so cannot be involved in a blocking pair.  Now suppose that $v_i$ is true under $f$. Then:
\begin{itemize}
\item applicants $x_i^1$, $x_i^2$, $a_i^1$, $a_i^4$ and $a_i^5$ have their most-preferred schools, so are not blocking;
\item applicants $y_i^1$ and $y_i^2$  prefer school $s_i^F$, but this school is assigned $a_i^6$, whom it prefers;
\item applicant $a_i^2$ prefers school $s_i^2$, but this school is assigned $a_i^3$, whom it prefers;
\item applicant $a_i^3$ prefers school $s_i^1$, but this school is assigned $a_i^1$, whom it prefers;
\item applicant $a_i^6$ prefers school $s_i^4$, but this school is assigned $a_i^2$, whom it prefers.
\end{itemize}
The case of a false variable can be proved similarly.

For the converse implication let us first prove two lemmata.

\begin{lema}\label{l1} Each stable matching $\MM$ in $J$ contains for each $i$ either all the pairs in $T_i$ or all the pairs in $F_i$.
\end{lema}
\begin{proof}
Let $\MM$ be a stable matching. Fix $i\in \{1,2,\dots,n\}$. Notice first that both schools $s_i^T$ and $s_i^F$ must be full, otherwise either $s_i^T$  will form a blocking pair with at least  one of $x_i^1$ and $x_i^2$, or $s_i^F$  will form a blocking pair with at least  one of $y_i^1$ and $y_i^2$. Further, let us distinguish the following cases.
\begin{itemize}
\item $\{(a_i^5, s_i^T), (a_i^6, s_i^F)\}\subseteq \MM$. Then, as there are no blocking pairs,  $\{(a_i^1, s_i^3), (a_i^2, s_i^4)\}\subseteq \MM$, which further implies  $\{(a_i^3, s_i^2), (a_i^4, s_i^1)\}\subseteq \MM$. This, however means that  $(a_i^3, s_i^1)$ and $(a_i^4, s_i^2)$ are blocking pairs for $\MM$, a contradiction.
\item $\{(x_i^1, s_i^T),(x_i^2, s_i^T), (y_i^1, s_i^F), (y_i^2, s_i^F)\}\subseteq \MM$. Now, to avoid blocking pairs,   $\{(a_i^5, s_i^3),$ $(a_i^6, s_i^4)\}$ $\subseteq \MM$, which further implies  $\{(a_i^1, s_i^1), (a_i^2, s_i^2)\}\subseteq \MM$. Then there are blocking pairs  $(a_i^3, s_i^2)$ and $(a_i^4, s_i^1)$, again a contradiction.
\end{itemize}
The result follows.
\end{proof}

\begin{lema}\label{l2} In each stable matching $\MM$ in $J$, every applicant in the set $\{x_i^1,x_i^2,y_i^1,y_i^2 : 1\leq i\leq n\}$ is assigned to her first- or second-choice school.
\end{lema}
\begin{proof}
For some $i\in\{1,2,\dots,n\}$, consider applicant $x_i^1$ (the argument for $x_i^2$, $y_i^1$, $y_i^2$ is similar).  Suppose firstly that $x_i^1$ is unassigned in $\MM$.  Then $(x_i^1,w_{i,3}^1)$ blocks $\MM$, a contradiction.  Now suppose that $(x_i^1,w_{i,3}^1)\in \MM$.  If $(q_{i,3}^1,w_{i,1}^1)\in \MM$ then $(q_{i,1}^1,w_{i,2}^1)\in \MM$, for otherwise $(q_{i,1}^1,w_{i,1}^1)$ blocks $\MM$.  But then $(q_{i,2}^1,w_{i,2}^1)$ blocks $\MM$, a contradiction.  Thus $q_{i,3}^1$ is unassigned in $\MM$.  Then $(q_{i,2}^1,w_{i,1}^1)\in \MM$, for otherwise  $(q_{i,2}^1,w_{i,1}^1)$ blocks $\MM$.  Also $(q_{i,1}^1,w_{i,2}^1)\in \MM$, for otherwise $(q_{i,1}^1,w_{i,2}^1)$ blocks $\MM$.  Hence $(q_{i,3}^1,w_{i,1}^1)$ blocks $\MM$, a contradiction.
\end{proof}

So, suppose that $\MM$ is a stable matching in $J$.   We form a truth assignment $f$ in $B$ as follows. Let $i\in\{1,2,\dots,n\}$ be given.  By Lemma 1, either $T_i\subseteq \MM$  or $F_i\subseteq \MM$.  In the former case set $f(v_i)={\sf true}$, otherwise set $f(v_i)={\sf false}$.  Now let $v_i\in V$ and suppose that $f(v_i)={\sf true}$.  Then by Lemma 2, each of $y_{i,1}$ and $y_{i,2}$ is assigned to her second choice school.  Now suppose that $f(v_i)={\sf false}$.  Then again by Lemma 2, each of $x_{i,1}$ and $x_{i,2}$ is assigned to her second choice school.  Now let $c_j\in C$ and suppose that all literals in $c_j$ are false.  By the preceding remarks about $x_{i,1},x_{i,2},y_{i,1}$ and $y_{i,2}$ we deduce that $z_j$ is over-subscribed, a contradiction.  Thus $f$ is a satisfying truth assignment.
\end{proof}

\section{Master lists}\label{s_master}

In some centralized matching schemes all the applicants are ordered in a common {\it master} list. Although the criteria used for creating such lists are often subject to some controversy (see  \cite{Haw06} for the description of the matching scheme for allocating  medical students to hospital posts in England in 2005-06 and \cite{Tom05} for the situation in the central allocation scheme of teachers in Portugal that was used prior to 2005), computationally the situation with master lists may be easier.  A detailed study of stable matching problems with master lists from the computational point of view can be found in \cite{IMS08}.  In this section we shall consider the case of a master list of applicants and the (perhaps slightly less realistic) case of a master list of schools.

The problems of deciding the existence of a stable matching in these cases will be denoted by {\sc tap-am} and   {\sc tap-sm}, respectively. The phrase '$s$ has enough capacity for $a$' used in the algorithms in this section means the following: if $a$ is of type $\{p,r\}$ then $|\MM_p(s)| < c_p(s)$ as well as $|\MM_r(s)| < c_r(s)$. 

{\small
\begin{figure}
\noindent\rule{14cm}{0.5pt}
\begin{tabbing}
~~~ \= ~~~ \= ~~~ \= ~~~ \= \kill
{\bf begin}\\
\> $\MM:=\emptyset$;\\
\>  {\bf for\ } $i=1,2,\dots,n$ \\
\>  \>  {\bf if\ } $a_i$'s list contains a school with enough free capacity for $a_i$\{ \\
\>  \>  \>  $s$:= first such school on  $a_i$'s list ;\\
\>  \>  \>  $\MM:=\MM\cup \{(a_i,s)\}$; \\
\>  \> \}\\
{\bf\ end}
\end{tabbing}
\vspace{-1ex}
\noindent\rule{14cm}{0.5pt}

\caption{Algorithm Serial Dictatorship}
\label{alg1}
\end{figure}
}
\begin{thm}
Let $J$ be an instance of  {\sc tap-am} with the master list $a_1,a_2, \dots,a_n$ of applicants. Then $J$ admits a unique stable matching that may be found by an application of Algorithm Serial Dictatorship as shown in Figure \ref{alg1}.
\end{thm}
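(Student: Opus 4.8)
The plan is to prove two things: first, that Algorithm Serial Dictatorship produces a matching that is stable, and second, that it is the unique stable matching. The algorithm is the obvious greedy one: applicants are processed in master-list order $a_1,\dots,a_n$, and each $a_i$ is assigned to the first school on her own preference list that still has enough free capacity (in both of her subjects) at the moment she is processed. It is immediate from the construction that the output $\MM$ is a matching: capacities are never exceeded because we only assign $a_i$ to $s$ when $|\MM_p(s)|<c_p(s)$ for both subjects $p\in\pp(a_i)$.

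For stability, I would argue by contradiction. Suppose $(a,s)$ is a blocking pair for $\MM$, where $\pp(a)=\{p_1,p_2\}$ and $a=a_i$ in the master list. Since $(a,s)$ blocks, either $a$ is unassigned in $\MM$, or $a$ prefers $s$ to $\MM(a)$; in either case, at the time $a_i$ was processed, the algorithm did not assign $a_i$ to $s$ even though $s$ appears on $a_i$'s list strictly before $\MM(a_i)$ (or $a_i$ was left unassigned). Hence at that moment $s$ did not have enough free capacity for $a_i$: either $|\MM'_{p_1}(s)|=c_{p_1}(s)$ or $|\MM'_{p_2}(s)|=c_{p_2}(s)$ in the partial matching $\MM'$ built so far. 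Now the key observation is that, because the master list is common to all schools, every applicant already assigned to $s$ at that point precedes $a_i$ in the master list, and therefore $s$ prefers each of them to $a_i$; moreover, since assignments are never undone, all of these applicants are still assigned to $s$ in the final matching $\MM$. So in $\MM$, school $s$ is full in at least one of $p_1,p_2$ with applicants all preferred to $a$. A short case analysis against the four conditions $(i)$--$(iv)$ of Definition \ref{d_blok} then shows none can hold: $(i)$ fails because $s$ is not undersubscribed in both subjects; $(ii)$ fails because in the subject $p_i$ where $s$ is full, every applicant in $\MM_{p_i}(s)$ is preferred to $a$, so $s$ does not prefer $a$ to anyone there, and if $s$ happens to be full in both then neither undersubscription clause applies; similarly $(iii)$ and $(iv)$ fail because $s$ prefers every member of $\MM_{p_1}(s)\cup\MM_{p_2}(s)$ (and in particular every member of $\MM_{p_1,p_2}(s)$) to $a$. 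This contradicts $(a,s)$ being a blocking pair.

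For uniqueness, I would show by induction on $i$ that any stable matching $\NN$ must agree with $\MM$ on the first $i$ applicants, i.e. $\NN(a_j)=\MM(a_j)$ for all $j\le i$. Assume this holds for $j<i$; consider $a_i$. Let $s=\MM(a_i)$ (possibly $\emptyset$). Since $\NN$ agrees with $\MM$ on $a_1,\dots,a_{i-1}$, and no applicant beyond $a_i$ is preferred by any school to $a_i$, the set of applicants occupying $s$'s partial capacities that are \emph{relevant} to $a_i$ is controlled: if $\NN(a_i)$ were a school $s'$ that $a_i$ prefers to $s$ (or if $\MM(a_i)=\emptyset$ but $\NN(a_i)$ ranks below where the greedy stalled), then because the algorithm skipped $s'$ for $a_i$, school $s'$ is already full in some subject of $a_i$ with $a_1,\dots,a_{i-1}$-applicants in $\NN$ too, making $a_i$ unable to be there — actually the cleaner direction is: if $\NN(a_i)\ne\MM(a_i)$ then either $a_i$ prefers $\MM(a_i)$ to $\NN(a_i)$ and $(a_i,\MM(a_i))$ blocks $\NN$ (by the same capacity/preference argument as above, since $\MM(a_i)$ had room for $a_i$ among only earlier applicants), or $a_i$ prefers $\NN(a_i)$ to $\MM(a_i)$, which contradicts the greedy choice since $\NN(a_i)$ was skipped yet, by the induction hypothesis together with the fact that later applicants are less preferred, $\NN(a_i)$ cannot have been full for $a_i$ at step $i$. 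Either way we reach a contradiction, completing the induction; taking $i=n$ gives $\NN=\MM$.

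The main obstacle is the uniqueness argument, specifically handling the asymmetry that a school $s$ can be full "in one subject only": the stalling of the greedy algorithm for $a_i$ at $s'$ must be traced precisely to a partial capacity being saturated by earlier (hence more-preferred) applicants, and one must check the blocking-pair conditions $(i)$--$(iv)$ carefully to confirm that this same configuration really does produce a blocking pair in any competing stable matching $\NN$. The stability direction is more routine once one records the single structural fact that, under a master list of applicants, the applicants occupying any school are always a prefix (in master-list order) of those who ever applied, so the school prefers all of them to anyone it rejected.
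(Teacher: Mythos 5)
Your argument is correct and follows essentially the same route as the paper: stability because any school the greedy pass skips for $a_i$ was already saturated, in one of her subjects, by master-list-earlier (hence preferred) applicants whose assignments are never undone, and uniqueness by examining the first applicant at which a putative second stable matching diverges, where one direction yields a capacity violation and the other a blocking pair (your write-up even spells out both directions, which the paper's proof compresses into ``it must be the case that $a_i$ prefers $\MM(a_i)$''). One small overstatement: only the occupants of the saturated subject of $s$ are guaranteed to precede $a_i$ (the other subject may later receive worse applicants), so it is not true in general that $s$ prefers every member of $\MM_{p_1}(s)\cup\MM_{p_2}(s)$ to $a_i$; but this weaker fact already defeats conditions $(i)$--$(iv)$, since every member of $\MM_{p_1,p_2}(s)$, and one of the two applicants required for $(iv)$, must come from the saturated subject.
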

\begin{proof} It is easy to see that Serial Dictatorship outputs a  matching.
We have to prove that this matching is stable and that it is the unique stable matching.
  
{\bf $\MM$ is stable. } Suppose that $(a_i,s_j)$ is a blocking pair and that $i$ is the smallest index of an applicant involved in a blocking pair. Since $a_i$ has chosen the best available school from her list, $s_j$ did not have enough capacity to accept $a_i$ when it was $a_i$'s turn in the algorithm. However, all the applicants that were assigned to $s_j$ at that moment precede $a_i$ in the master list, hence   $(a_i,s_j)$ cannot be a blocking pair.

{\bf Uniqueness.} Let $\MM'\ne \MM$ be another stable matching and let $a_i$ be the first applicant in the master list with $\MM'(a_i)\ne \MM(a_i)$. As Serial Dictatorship gave $a_i$ her best available school and all applicants who precede $a_i$ in the master list have the same assignments in $\MM$ as in $\MM'$, it must be the case that $a_i$ prefers $s_j=\MM(a_i)$ to $s_k=\MM'(a_i)$. But this implies that $(a_i,s_j)$ is a blocking pair for $\MM'$, as $s_j$ will be able to reject one or two applicants worse than $a_i$ in order to free up sufficient capacity for $a_i$ (for, $s_j$ had enough room for $a_i$ in $\MM$ when it was $a_i$'s turn during Serial Dictatorship, and any applicant that  precedes  $a_i$ in the master list has the same assignment in $\MM$ as she does in $\MM'$). 
\end{proof}

The situation with a master list of schools, although less likely to occur in practice, is also efficiently solvable.

{\small
\begin{figure}
\noindent\rule{14cm}{0.5pt}
\vspace{-2ex}
\begin{tabbing}
~~~ \= ~~~ \= ~~~ \= ~~~ \= \kill
{\bf begin}\\
\> $\MM:=\emptyset$;\\
\> {\bf for\ } $j=1,2,\dots,m$ \\
\> \> /* let $s_j$'s list be $a_{i_1},\dots, a_{i_\ell}$ */ \\
\> \> {\bf for\ } $r=1,2,\dots,\ell$ \\
\> \> \> {\bf if} $a_{i_r}$ is unassigned and $s_j$ has enough capacity for $a_{i_r}$ {\bf then}\\
\> \> \> \> $\MM:=\MM\cup \{(a_{i_r},s_j)\}$; \\
{\bf end}
\end{tabbing}
\vspace{-2ex}
\noindent\rule{14cm}{0.5pt}
\vspace{-2ex}
\caption{Algorithm Dual Serial Dictatorship}
\label{alg2}
\end{figure}
}
\begin{thm}
Let $J$ be an instance of  {\sc tap-sm} with the master list of schools $s_1,s_2, \dots,s_m$. Then $J$ admits a unique stable matching that may be found  by an application of Algorithm Dual Serial Dictatorship as shown in Figure \ref{alg2}.
\end{thm}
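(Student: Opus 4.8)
The plan is to mirror the proof of the previous theorem (the master-list-of-applicants case), but with the roles of the two sides interchanged and with one extra wrinkle: a school may need to reject applicants to accommodate a blocking pair, and the ordering within each school's list is no longer the global master list, so the bookkeeping is a little more delicate.

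First I would observe that Algorithm Dual Serial Dictatorship clearly outputs an assignment $\MM$ that is a matching: the capacity check ``$s_j$ has enough capacity for $a_{i_r}$'' (meaning $|\MM_p(s_j)|<c_p(s_j)$ and $|\MM_r(s_j)|<c_r(s_j)$ for $a_{i_r}$ of type $\{p,r\}$) is performed before each insertion, and once an applicant is assigned she is never reconsidered, so no applicant lies in two pairs and no partial capacity is exceeded.

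Next, stability. Suppose $(a,s_j)$ is a blocking pair for $\MM$, and take $j$ to be the smallest index of a school involved in a blocking pair; say $a$ has type $\{p,r\}$. When $s_j$ was processed in the outer loop, it considered every applicant on its list, in particular $a$ (note $s_j\in S(a)$, so $a$ appears on $s_j$'s list). If $a$ had been unassigned at that point with $s_j$ having enough capacity for her, she would have been taken by $s_j$; hence either $a$ was already assigned when $s_j$'s turn came — meaning $\MM(a)$ is some school $s_{j'}$ with $j'<j$, so $a$ does not prefer $s_j$ to $\MM(a)$ (for if she did, $(a,s_{j'})$ would not be the pair that assigned her; more precisely $a$ was taken by the first school on nobody's ordering — rather, $a$ was taken by $s_{j'}$, the school processed earliest among those that had capacity for her while she was unassigned, and since no later-processed school can make her better off in this order we must argue directly that $s_j$ processed after $s_{j'}$ cannot be preferred by... ). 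Here I would instead argue: since $a$ is assigned to $s_{j'}$ with $j'<j$ and $(a,s_j)$ blocks, $a$ prefers $s_j$ to $s_{j'}$; but then when $s_j$ (processed after $s_{j'}$) ran its loop, $a$ was still unassigned only if $s_{j'}$ had not yet taken her — contradiction since $j'<j$. So in fact $a$ was already assigned, and we must rule out the blocking condition via $s_j$'s side: when $s_j$ was processed, every applicant it ultimately accepted it accepted while each was unassigned; so no applicant currently in $\MM(s_j)$ entered after $s_j$'s turn, and $s_j$ filled greedily down its own list. Hence every applicant in $\MM_p(s_j)\cup \MM_r(s_j)$ precedes $a$ on $s_j$'s list (otherwise $s_j$ would have stopped at $a$ or would have had room to also take $a$), so none of the blocking conditions $(ii)$–$(iv)$ can hold, and $(i)$ cannot hold because if $s_j$ were undersubscribed in both $p$ and $r$ it would have taken $a$ when her turn came in $s_j$'s loop (she was unassigned then, as argued). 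This contradiction establishes stability.

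For uniqueness, let $\MM'\ne\MM$ be another stable matching, and let $s_j$ be the first school in the master list at which $\MM'$ and $\MM$ differ, i.e.\ $\MM'(s_j)\ne\MM(s_j)$ while $\MM(s_{j''})=\MM'(s_{j''})$ for all $j''<j$. Walk down $s_j$'s preference list $a_{i_1},\dots,a_{i_\ell}$ to the first applicant $a$ on which the two matchings treat $s_j$ differently. Since Dual Serial Dictatorship assigned $a$ to $s_j$ iff $a$ was unassigned (by schools $s_1,\dots,s_{j-1}$, which behave identically under $\MM'$) and $s_j$ still had capacity, a short case analysis shows $\MM(a)=s_j$ but $\MM'(a)\ne s_j$, and moreover $a$ is either unassigned in $\MM'$ or $\MM'(a)=s_{j^*}$ for some $j^*>j$ (she was unassigned when $s_j$'s turn came, so no earlier school holds her in $\MM'$ either). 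Then $(a,s_j)$ blocks $\MM'$: $a$ is unassigned or prefers $s_j$ to $\MM'(a)$ (a school later in the master list — here I would remark that in the master-list-of-schools setting each applicant's preference order must refine the master list, which is the standard convention, so $a$ indeed prefers $s_j$), and $s_j$ either has a free slot in both of $a$'s subjects or prefers $a$ to enough of its current assignees, exactly as in the proof of Theorem~3 — because $s_j$ had room for $a$ at the moment $a$'s turn came in its loop under $\MM$, and all applicants $s_j$ accepted before reaching $a$ precede $a$ on its list. This contradicts stability of $\MM'$, so $\MM=\MM'$.

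The main obstacle I anticipate is the stability argument's handling of an already-assigned applicant $a$: I must cleanly separate ``why $a$ cannot prefer $s_j$ to her assignment'' (which follows from the order in which schools are processed versus $a$'s preferences refining the master list of schools) from ``why $s_j$ cannot want $a$'' (which follows from $s_j$'s greedy fill of its own list). Getting the interaction of the two master lists right — the applicants' lists being consistent with the school master list — is the delicate point; once that convention is pinned down, the rest is a direct transcription of the Theorem~3 proof with the two sides swapped.
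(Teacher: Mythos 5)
Your route is sound but genuinely different from the paper's. You transcribe the proof of the {\sc tap-am} theorem with the two sides swapped: argue directly that the output is a matching, kill a hypothetical blocking pair $(a,s_j)$ by examining what happened when $s_j$ reached $a$ in its own loop, and obtain uniqueness from the first school (in master-list order) and the first position on its list at which another stable matching deviates, exhibiting a blocking pair there. The paper instead argues by induction along the master list of schools: the subinstance $J(s_1)$ consisting of $s_1$ and its applicants is a single-school instance, hence an instance of {\sc tap-am} with a unique stable matching, which is exactly what one iteration of Dual Serial Dictatorship produces; since every applicant matched to $s_1$ has her first choice (applicants' lists refine the school master list), those pairs cannot be disturbed, so they are deleted and induction finishes the proof. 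Your version is self-contained but redoes the blocking-pair case analysis; the paper's is shorter because it recycles the {\sc tap-am} result. Both hinge on the convention you correctly isolate, namely that in {\sc tap-sm} the applicants' preferences are derived from the master list of schools.

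Two repairs are needed in your stability argument. First, the claim that \emph{every} applicant in $\MM_{p}(s_j)\cup\MM_{r}(s_j)$ precedes $a$ on $s_j$'s list is false: after skipping $a$, the school may still accept, further down its list, applicants sharing only the non-saturated subject with $a$. For instance, with capacities $(F,I,M)=(1,2,1)$ and list $b_1$ (type $FI$), $a$ (type $FM$), $b_2$ (type $IM$), the algorithm accepts $b_1$ and $b_2$ and skips $a$, yet $b_2\in\MM_M(s_j)$ follows $a$. What is true, and suffices, is that some subject $u\in\pp(a)$ was already full when $s_j$ reached $a$; since occupancy never decreases, all of $\MM_u(s_j)$ precedes $a$, and this alone defeats each of conditions $(i)$--$(iv)$ of Definition \ref{d_blok}, because each of them requires either undersubscription in $u$ or a member of $\MM_u(s_j)$ ranked below $a$. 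Second, the middle of your case analysis is garbled: the clean statement is that if $a$ was already assigned when $s_j$ was processed, then $\MM(a)=s_{j'}$ with $j'<j$, and the master-list convention makes $a$ prefer $s_{j'}$ to $s_j$, so the applicant-side condition of blocking fails; the timing ``contradiction'' you sketch there is neither needed nor valid as stated. With these local fixes (and the short omitted case analysis in the uniqueness step spelled out, which does go through), your proof is correct.
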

\begin{proof} 
Let us denote by $J(s_1)$ the subinstance of $J$ containing just school $s_1$ and applicants who apply to $s_1$. $J(s_1)$ is an instance of {\sc tap-am}, so it has a unique stable matching. This is obtained by Serial dictatorship of applicants that is equivalent with the part of Dual serial dictatorship within one iteration of the {\bf for}-cycle for schools. Let us denote this matching by $\MM_1$. Let us further observe that no applicant assigned to $s_1$ could be a member of a blocking pair, as she received her most preferred school. If we now denote by $J(-s_1)$ the subinstance of $J$ with pairs of $\MM_1$ deleted, the result follows by induction.
\end{proof}

\section{Subject-specific preference lists}\label{s_subject_specific}

Suppose that a school can order applicants differently for different subjects. The  definition of a blocking pair should now be modified in order to take account of this scenario.

\begin{df}\label{d_blok2} Let $\MM$ be a matching. We say  that a pair $(a,s)$ with $\pp(a)=\{p_1,p_2\}$ and  $s\in S(A)$  is blocking if
$a$ is not assigned in $\MM$ or $a$ prefers $s$ to $\MM(a)$, and one of the following conditions hold:
\begin{itemize}\itemsep0pt
\item[$(i)$] $s$ is undersubscribed in both $p_1$ and $p_2$, 
\item[$(ii)$] $s$ is undersubscribed in $p_i$ and it prefers $a$ in subject $p_{3-i}$  to one applicant in $\MM_{p_{3-i}}(s)$ for some $i\in\{1,2\}$, 
\item[$(iii)$] $s$ prefers $a$ in both subjects $p_1,p_2$ to one applicant in $\MM_{p_1,p_2}(s)$,
\item[$(iv)$] $s$ prefers $a$ in subject $p_1$ to applicant $a_1\in \MM_{p_1}(s)$ and in subject $p_2$ to  another applicant $a_2\in \MM_{p_2}(s)$.
\end{itemize}
\end{df}

Let us denote this variant by {\sc tap-ss}.  Define a matching $\MM$ in a {\sc tap-ss} instance to be \emph{applicant-complete} if every applicant is assigned in $\MM$.

\begin{lema}
Given an instance of {\sc tap-ss}, the problem of deciding whether an applicant-complete stable matching exists is NP-complete. This result holds even if there are at most three subjects, each partial capacity of a school is at most 1, and the preference lists of the schools are derived from subject-specific master lists of the applicants.
\label{lem1}
\end{lema}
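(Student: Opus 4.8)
The plan is to give a polynomial reduction from \sat, reusing the overall architecture of the proof of Theorem \ref{thm1} but adapting the gadgets so that (a) subject-specific master lists suffice, (b) every partial capacity is $1$, and (c) we ask for an \emph{applicant-complete} stable matching rather than merely a stable one. Membership in NP is clear, since given a matching we can verify in polynomial time that it is a matching, that every applicant is assigned, and that no pair of the four types in Definition \ref{d_blok2} is blocking.

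The first task will be to construct, for each variable $v_i$, a \emph{variable gadget} that forces a binary choice in any applicant-complete stable matching, analogous to the sets $T_i$ and $F_i$ of Theorem \ref{thm1}. Because we now only have capacities of $1$, each school accepts at most one applicant per subject, so the gadget must be redesigned: the idea is to use a short cycle of applicants and schools in which, under subject-specific preferences, there are exactly two stable ``rotation states'', corresponding to $v_i$ being true or false, and in which any other configuration either leaves an applicant unassigned (violating applicant-completeness) or creates a blocking pair of type $(ii)$ or $(iv)$. The requirement that the schools' lists come from subject-specific master lists is the delicate constraint here: I would fix a global master list for subject $F$, one for subject $I$, and one for subject $M$, and then choose the applicant indices within each gadget so that each school's induced order on its applicants in each subject is consistent with the appropriate master list. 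This is possible because each school only sees a few applicants, so there is enough freedom to interleave gadgets' applicants in the three master lists without conflict; making this bookkeeping precise (e.g.\ by ordering all $a$-type applicants of gadget $i$ before those of gadget $i+1$, and slotting the literal-applicants appropriately) is the main technical chore.

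Next I would add a \emph{clause gadget}: for each clause $c_j$ a school $z_j$ (now with partial capacities $1$, not $2$) together with enough auxiliary applicants and schools so that $z_j$ can ``absorb'' the literal-applicants of its false literals but is forced into a blocking pair (or forced to leave an auxiliary applicant unassigned) if all three literals are false. Here the reduction from \sat rather than general \textsc{3-sat} matters: since each literal occurs exactly twice, each literal-applicant has a bounded-length preference list and appears in a bounded number of clause gadgets, which keeps the master lists consistent and the capacities at $1$. The literal-applicants will have preference lists of the form ``own variable-gadget school, then the clause school $c(\cdot)$, then a local dummy school'', exactly mirroring the $x$- and $y$-applicants of Theorem \ref{thm1}, so that being ``rejected'' by the variable gadget pushes them onto the clause school.

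Finally I would prove the two directions. For the forward direction, given a satisfying assignment $f$, I set each variable gadget to its true/false state according to $f$, route each satisfied-literal applicant onto its clause school or dummy as appropriate, and verify directly — going through the four blocking-pair conditions of Definition \ref{d_blok2} — that the resulting matching is applicant-complete and stable; this is routine once the gadgets are fixed. For the converse, I would first prove gadget lemmas in the style of Lemmas \ref{l1} and \ref{l2}: an applicant-complete stable matching must put each variable gadget into one of its two canonical states, and must route the literal-applicants to their first or second choices; I then read off a truth assignment from the gadget states and argue that if some clause $c_j$ had all literals false, the three corresponding literal-applicants would all land on $z_j$, exceeding its unit capacity in some subject, or else one of them would block — contradiction. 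I expect the main obstacle to be the simultaneous satisfaction of all three restrictions at once: squeezing the variable gadget down to unit capacities while still forcing exactly two stable states, and doing so with schools' preferences induced from three fixed subject-specific master lists, will require a more intricate gadget and a careful global indexing argument than the capacity-$2$ construction of Theorem \ref{thm1}.
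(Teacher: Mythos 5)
Your plan correctly identifies the shape of the target (a reduction from \sat\ with variable gadgets admitting exactly two ``rotation states'' and with globally coordinated subject-specific master lists), but it stops short of constructing the gadgets and master lists, and these constructions are the entire content of the proof; as it stands the argument is a programme rather than a proof. More importantly, the one concrete mechanism you do commit to for the clause gadget cannot work under the stated restrictions. With only three subjects, any two applicant types ($FI$, $FM$, $IM$) share a subject, so a school all of whose partial capacities are $1$ can be assigned at most \emph{one} applicant in any matching. Hence a single clause school $z_j$ can never ``absorb the literal-applicants of its false literals'' (there may be two of them when the clause is satisfied), and the Theorem \ref{thm1} paradigm --- route false-literal applicants to the clause school and detect unsatisfiability as oversubscription --- simply does not transfer to the unit-capacity setting. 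Your caveat that auxiliary applicants and schools will be needed does not repair this, because no amount of auxiliary machinery lets the false literals of a satisfied clause be parked at the clause school itself.

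The paper's construction inverts this logic. Each clause $c_j$ gets four schools $s_j^1,\dots,s_j^4$ and four applicants $q_j,w_j^1,w_j^2,w_j^3$ of type $FM$, with $q_j$ ranking $s_j^1,s_j^2,s_j^3$ and $w_j^t$ ranking $s_j^t,s_j^4$. Applicant-completeness forces $q_j$ onto some position school $s_j^t$, the displaced $w_j^t$ onto $s_j^4$, and the other two $w$'s onto their own schools, so in any applicant-complete stable matching \emph{no} literal applicant ever reaches a clause school; the four $x$-applicants of each variable gadget must then be matched within their four $y$-schools, which (by the cyclic preference structure) happens in exactly one of the two ways $T_i$ or $F_i$. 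Unsatisfiability is detected not by capacity overflow but by a blocking pair: a false-literal applicant sits at its third-choice $y$-school and, since the $F$ master list orders $\langle W\rangle$ before $\langle X\rangle$ before $\langle Q\rangle$, it blocks with $s(x_{4i+r})$ precisely when that school is occupied by $q_j$ rather than by $w_j^t$, i.e.\ precisely when position $t$ was supposed to carry a true literal. The master lists ($F:\langle W\rangle\langle X\rangle\langle Q\rangle$, $I:\langle\bar X\rangle$, $M:\langle Q\rangle\langle\bar W\rangle$) are not mere bookkeeping but carry the stability argument, so the claim that ``there is enough freedom to interleave'' them needs to be replaced by an explicit choice and verification against all four conditions of Definition \ref{d_blok2}. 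Without a clause mechanism of this (or some equivalently reworked) kind, your reduction cannot be completed.
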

\begin{proof} We reduce from \sat (see the proof of Theorem \ref{thm1}).
Let $B$ be an instance of this problem, where $V = \{v_0, v_1,\dots, v_{n-1}\}$ and $C =\{c_1, c_2,\dots, c_m\}$ be the set of variables and clauses respectively in $B$. We construct an instance $J$ of {\sc tap} in the following way.  

There are 3 subjects, namely $F$, $I$ and $M$. For each variable $v_i$ ($0\leq i\leq n-1$) there are 4 applicants $x_{4i+r}$ ($0\leq r\leq 3$), each of type $FI$, and 4 schools $y_{4i+r}$ ($0\leq r\leq 3$).  For each clause $c_j$ ($1\leq j\leq m$) there are 4 applicants  $q_j$ and $w_j^t$ ($1\leq t\leq 3$), each of type $FM$, and 4 schools $s_j^t$ ($1\leq t\leq 4$).  Let $X=\{x_i : 0\leq i\leq 4n-1\}$, $Y=\{y_i : 0\leq i\leq 4n-1\}$, $W=\{w_j^t : 1\leq j\leq m\wedge 1\leq t\leq 3\}$, $Q=\{q_j : 1\leq j\leq m\}$, $S'=\{s_j^t : 1\leq j\leq m\wedge 1\leq t\leq 3\}$, $S''=\{s_j^4 : 1\leq j\leq m\}$ and $S=S'\cup S''$. 

For each $i$ ($0\leq i\leq n-1$), applicants $x_{4i}$ and $x_{4i+1}$ correspond to the first and second occurrences of literal $v_i$ in $B$, and applicants $x_{4i+2}$ and $x_{4i+3}$ correspond to the first and second occurrences of literal $\bar{v_i}$ in $B$, respectively.  For each $r\in \{0,1\}$, let $s(x_{4i+r})$ denote the school $s_j^t$ such that the $(r+1)$th occurrence of literal $v_i$ appears in position $t$ of clause $c_j$ ($1\leq j\leq m$, $1\leq t\leq 3$). Similarly, for each $r\in \{2,3\}$, let $s(x_{4i+r})$ denote the school $s_j^t$ such that the $(r-1)$th occurrence of literal $\bar{v_i}$ appears in position $t$ of clause $c_j$ ($1\leq j\leq m$, $1\leq t\leq 3$).

\begin{figure}[ht]
\begin{displaymath}
\begin{array}{rllrl}
\mbox{applicant} &\mbox{type} &\mbox{preferences}  & \mbox{school} & \mbox{capacities} \\
&&&& F\quad I\quad M \\ 
\noalign{\hrule} 
\\[-8pt]
x_{4i}   & FI & y_{4i},s(x_{4i}),y_{4i+1}                   & y_{4i} & 1\quad\ 1\quad\ 0 \\[1.5mm]
x_{4i+1} & FI & y_{4i+1},s(x_{4i+1}),y_{4i+2} \qquad \qquad & y_{4i+1} & 1\quad\ 1\quad\ 0 \\[1.5mm]
x_{4i+2} & FI & y_{4i+3},s(x_{4i+2}),y_{4i+2}               & y_{4i+2} & 1\quad\ 1\quad\ 0 \\[1.5mm]
x_{4i+3} & FI & y_{4i},s(x_{4i+3}),y_{4i+3}                 & y_{4i+3} & 1\quad\ 1\quad\ 0 \\[1.5mm]
q_j      & FM & s_j^1,s_j^2,s_j^3                           & s_j^t    & 1\quad\ 1\quad\ 1 \\[1.5mm]
w_j^t    & FM & s_j^t,s_j^4                                 & s_j^4    & 1\quad\ 0\quad\ 1 \\[1.5mm]
\noalign{\hrule}
\end{array}
\end{displaymath}
\caption{The {\sc tap-ss} instance constructed in the proof of Lemma \ref{lem1}}\label{t_2}
\end{figure} 

The applicants' preferences, together with a summary of their types and a summary of the schools' partial  capacities, are given in Figure \ref{t_2}.  Here, the subscripts and superscripts involving $i$, $j$ and $t$ range over the following intervals: $0\leq i\leq n-1$, $1\leq j\leq m$ and $1\leq t\leq 3$.

\begin{figure}[ht]
\begin{displaymath}
\begin{array}{rl}\\
F: & \langle W\rangle ~~ \langle X\rangle ~~ \langle Q\rangle \\
I: & \langle \bar{X}\rangle \\
M: & \langle Q\rangle ~~ \langle \bar{W}\rangle
\end{array}
\end{displaymath}
\caption{The master lists for the {\sc tap-ss} instance constructed in the proof of Lemma \ref{lem1}}\label{t_3}
\end{figure}

We now construct the subject-specific master lists of applicants.  Let $\langle X\rangle$ denote the elements of $X$ in increasing order of subscript, and let $\langle \bar{X}\rangle$ denote the reverse of this order.  Similarly let $\langle W\rangle$ denote the elements of $W$ listed in increasing order of subscript, and within this ordering, those elements with equal subscript are listed in increasing order of superscript.  Also let $\langle \bar{W}\rangle$ denote the reverse of $\langle W\rangle$.  Finally let $\langle Q\rangle$ denote the elements of $Q$ listed in increasing order of subscript.  The master lists of the applicants with respect to subjects are shown in Figure \ref{t_3}.

For each $i$ ($0\leq i\leq n-1$), let us denote $T_i=\{(x_{4i+r},y_{4i+r}) : 0\leq r\leq 3\}$ and $F_i=\{(x_{4i+r},y_{4i+r+1}) : 0\leq r\leq 2\}\cup \{(x_{4i+3},y_{4i})\}$.  We claim that $B$ has a satisfying truth assignment if and only if $J$ has an applicant-complete stable matching.

For, let $f$ be a satisfying truth assignment of $B$. Define a matching $\MM$ in $J$ as follows.
For each variable $v_i\in V$, if $f(v_i)={\sf true}$, add the pairs in $T_i$ to $\MM$ and if $f(v_i)={\sf false}$, add the pairs in $F_i$ to $\MM$.  Each clause $c_j\in C$ contains some literal in $c_j$ that is true under $f$,  let $t$ be the position of $c_j$ containing this true literal ($1\leq t\leq 3$).  Add the following pairs to $\MM$: 
$$\{(w_j^{t'},s_j^{t'}) : 1\leq t'\leq 3\wedge t\neq t'\}\cup \{(q_j,s_j^t),(w_j^t,s_j^4\}.$$ 

It is obvious that the defined assignment is an applicant-complete matching; it remains to prove that it is stable.  It is straightforward to verify that no applicant in $Q$ can be involved in a blocking pair of $\MM$, and no pair in $X\times Y$ can block $\MM$.  Now suppose that $(w_j^t,s_j^t)$ blocks $\MM$ for some $j$ ($1\leq j\leq m$) and $t$ ($1\leq t\leq 3$).  Then $(w_j^t,s_j^4)\in \MM$ and $(q_j,s_j^t)\in \MM$, and school $s_j^t$ prefers $q_j$ over $w_j^t$ for subject $M$, so $(w_j^t,s_j^t)$ cannot block $\MM$ after all.  Finally suppose that $(x_{4i+r},s(x_{4i+r}))$ blocks $\MM$ for some $i$ ($0\leq i\leq n-1$) and $r$ ($0\leq r\leq 1$).  Then $f(v_i)={\sf false}$ by construction of $\MM$.  Let $s_j^t=s(x_{4i+r})$.  Then $(q_j,s_j^t)\in \MM$, since $(x_{4i+r},s(x_{4i+r}))$ blocks $\MM$.  But then by construction of $\MM$, the $t$th literal of $c_j$ is true under $f$, a contradiction.  The argument is similar if $r\in \{2,3\}$. Hence $\MM$ is stable in $J$.

Conversely suppose that $\MM$ is an applicant-complete stable matching in $J$.  
For any $j$ ($1\leq j\leq m$), it follows that $(q_j,s_j^t)\in \MM$ for some $t$ ($1\leq t\leq 3$) and thus $(w_j^t,s_j^4)\in \MM$, since $q_j$ and $w_j^t$ must be assigned.  Moreover $(w_j^{t'},s_j^{t'})\in \MM$ for each $t'$ ($1\leq t'\leq 3$, $t\neq t'$).  Thus $\MM$ contains no pair of the form $(x_{4i+r},s(x_{4i+r}))$ ($0\leq i\leq n-1$, $0\leq r\leq 3$).  Moreover, since each member of $X$ must be assigned in $\MM$, we have thus established that for each $i$ ($0\leq i\leq n-1$), either $T_i\subseteq \MM$ or $F_i\subseteq \MM$.  

Now we construct a truth assignment $f$ in $B$ as follows.  
If $T_i\subseteq \MM$ set  $f(v_i)={\sf true}$ and if  $F_i\subseteq \MM$ set $f(v_i)={\sf false}$.  
We claim that $f$ is a satisfying truth assignment.  For, suppose that some clause $c_j$ contains no true literal.  As $\MM$ is an applicant-complete matching, $(q_j,s_j^t)\in \MM$ for some $t$ ($1\leq t\leq 3$).  Now let $x_{4i+r}$ be the applicant such that $s(x_{4i+r})=s_j^t$ ($0\leq i\leq n-1$, $0\leq r\leq 3$).  If $r\in \{0,1\}$ then $f(v_i)={\sf false}$, so $F_i\subseteq \MM$.  Hence $(x_{4i+r},s(x_{4i+r}))$ blocks $\MM$, a contradiction.  Similarly if $r\in \{2,3\}$ then $f(v_i)={\sf true}$, so $T_i\subseteq \MM$.  Hence $(x_{4i+r},s(x_{4i+r}))$ blocks $\MM$, again a contradiction.
\end{proof}

\begin{lema}
Given an instance of {\sc tap-ss}, the problem of deciding whether a stable matching exists is NP-complete. This result holds even if there are at most three subjects, each partial capacity of a school is at most 1, and the preference lists of the schools are derived from subject-specific master lists of the applicants.
\label{lem2}
\end{lema}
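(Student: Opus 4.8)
The plan is to give another reduction from \sat, this time producing an instance in which \emph{every} stable matching is applicant-complete, so that the equivalence established in Lemma \ref{lem1} carries over. The natural starting point is the instance $J$ built in the proof of Lemma \ref{lem1}, but it cannot be used verbatim: it admits stable matchings that are not applicant-complete \emph{whatever} $B$ is. For instance, leaving every $q_j$ unassigned, placing each $w_j^t$ at $s_j^t$ (so each $s_j^4$ is empty), and placing each block of $x$-applicants at its ``$T_i$'' position gives, as one verifies from Definition \ref{d_blok2}, a stable matching: the clause schools $s_j^t$ are then full in $F$ and $M$ with the $w$-applicants, whom the $F$-master list ranks above the $q$- and $x$-applicants, so no pair can block. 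Hence I would construct a new instance $J'$ from $J$ by adding \emph{forcing gadgets} that make it impossible to leave any applicant unassigned in a stable matching, and then inherit the correspondence with satisfying assignments from the analysis of Lemma \ref{lem1}.

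Concretely the proof would have three steps. First, pin down exactly which applicants of $J$ can be left unassigned by a stable matching and what this forces on the remaining applicants; it suffices to force each $q_j$ to be assigned to one of $s_j^1,s_j^2,s_j^3$ and each $x$-applicant to be assigned, since once this holds the converse argument of Lemma \ref{lem1} (the $w_j^t$ fall into place, no $x$-applicant sits at $s(x_{4i+r})$, each $T_i$ or $F_i$ is forced, each clause is satisfied) goes through unchanged. Second, attach to each such applicant a gadget consisting of a bounded number of new applicants and new schools — all with partial capacities at most $1$ and using only the subjects $F,I,M$ — whose purpose is to turn any attempt to leave that applicant unassigned, or to divert it away from its structural schools, into a blocking pair, while still admitting a stable sub-matching in the configuration corresponding to a satisfying assignment; the subject-specific master lists are extended to the new applicants by inserting them at prescribed positions. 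Third, check the two implications for $J'$: given a satisfying assignment $f$, extend the applicant-complete stable matching of Lemma \ref{lem1} by the intended stable sub-matchings of the gadgets and verify that no pair (old, new, or mixed) blocks; conversely, from any stable matching of $J'$ first deduce that every applicant of the $J$-part is assigned as required, and then run the converse argument of Lemma \ref{lem1} to read off a satisfying truth assignment.

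The main obstacle, and the only genuinely non-routine part, is the gadget design under the stated restrictions. A naive ``private school'' appended to the end of an applicant's list fails: an $FM$-applicant parked there prefers its three clause schools $s_j^t$, but when those are occupied by other $FM$-applicants none of the conditions of Definition \ref{d_blok2} fires — the only slot left free at such a school is in subject $I$, which an $FM$-applicant cannot use, and conditions $(iii)$ and $(iv)$ cannot be met against a single $FM$-occupant when all partial capacities equal $1$. Thus the gadget must instead introduce auxiliary applicants of a complementary type (for example of type $IM$, placed near the front of the $I$- and $M$-master lists) whose presence forces the clause schools to shed any surplus $FM$-applicants, and one has to check simultaneously that these auxiliary applicants create no blocking pair in the intended stable matching, that they can always be accommodated, and that the three-subject and capacity-one restrictions survive. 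Reconciling all of these requirements is where the work lies; the remainder of the argument is a re-run of the case analysis already done for Lemma \ref{lem1}.
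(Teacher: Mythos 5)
Your write-up is a plan rather than a proof: the one step you yourself identify as ``the only genuinely non-routine part'' --- the design of the forcing gadget under the three-subject, capacity-one, subject-specific-master-list restrictions --- is exactly the step you leave open (``reconciling all of these requirements is where the work lies''). Everything else you describe (that $J$ from Lemma \ref{lem1} has non-applicant-complete stable matchings regardless of $B$, that once every $q_j$ and every $x$-applicant is forced to be assigned the converse argument of Lemma \ref{lem1} goes through, and that a naive private school appended to a list is not enough because no condition of Definition \ref{d_blok2} fires against a single occupant of the same type) is correct and is precisely the motivation for the gadget, but without an explicit construction and the accompanying blocking-pair verification in both directions, the lemma is not proved.

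For comparison, the missing construction can be done entirely locally, without touching the clause schools at all (your sketched mechanism, auxiliary $IM$-applicants that make the clause schools ``shed surplus $FM$-applicants,'' is the riskier route: such auxiliaries must themselves be placed somewhere in the intended matching, and if left near the front of the $I$- and $M$-lists they tend to block with the very clause schools you want them to police). The clean gadget is: for \emph{every} applicant $a$ of $J$, append one private school $g(a)$ with capacity $1$ in each of $F$, $I$, $M$ to the end of $a$'s list, and add two new applicants $a'$, $a''$ of the two types complementary to $a$'s (e.g.\ $FM$ and $IM$ when $a$ is $FI$), each finding only $g(a)$ acceptable; the subject-specific master lists are extended so that $a''$ beats $a$ on the shared subject, $a$ beats $a'$, and $a'$ beats $a''$. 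Because $g(a)$ has a free slot in the third subject, condition $(ii)$ of Definition \ref{d_blok2} then fires in every unwanted configuration: $a$ sitting at $g(a)$ is blocked by $(a'',g(a))$, and $a$ unassigned is blocked via whichever of $a'$, $a''$, or nobody occupies $g(a)$; meanwhile the intended matching places $a'$ at $g(a)$ and one checks $a''$ does not block. This forces every original applicant to a school strictly better than $g(a)$, i.e.\ to a school of $J$, after which deleting the pairs $(a',g(a))$ yields an applicant-complete stable matching of $J$ and Lemma \ref{lem1} finishes the argument. Until you supply a gadget of this kind, together with the extended master lists and the case analysis showing both that the intended matching stays stable and that every stable matching of the new instance restricts to an applicant-complete stable matching of $J$, your argument has a genuine gap at its core.
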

\begin{proof}
We show how to modify the reduction presented in the proof of Lemma \ref{lem1} in order to ensure that any stable matching in $J$ is applicant-complete.  We create a new {\sc tap} instance $J'$ from $J$ as follows.  For each applicant $a$ in $J$, create two new applicants $a'$ and $a''$.  If $a$ is of type $FI$, then $a'$ and $a''$ are of type $FM$ and $IM$ respectively.  If $a$ is of type $FM$, then $a'$ and $a''$ are of type $FI$ and $IM$ respectively.  Create a new school $g(a)$ which has capacity 1 for each of subjects $F$, $I$ and $M$.  Append $g(a)$ to applicant $a$'s preference list in $J$ to obtain her preference list in $J'$.  Each of applicants $a'$ and $a''$ finds only $g(a)$ acceptable.

Let $X'$ and $X''$ denote the sets of newly-created applicants in $J'$ with single and double primes respectively that correspond to applicants in $X$.  Define $R'$ and $R''$ similarly for the newly-created applicants in $J'$ that correspond to applicants in $Q\cup W$.  For $A\in \{R,X\}$, let $\langle A'\rangle$ and $\langle A''\rangle$ denote arbitrary but fixed orderings of the applicants in $A'$ and $A''$ respectively.

The subject-specific master lists in $J'$ are as shown in Figure \ref{t_4}.
\begin{figure}[ht]
\begin{displaymath}
\begin{array}{rl}
F: & \langle W\rangle ~~ \langle X\rangle ~~ \langle Q\rangle ~~ \langle X'\rangle ~~ \langle R'\rangle\\
I: & \langle X''\rangle  ~~ \langle R'\rangle ~~ \langle R''\rangle ~~ \langle \bar{X}\rangle \\
M: & \langle R''\rangle ~~ \langle Q\rangle ~~ \langle \bar{W}\rangle ~~ \langle X'\rangle ~~ \langle X''\rangle
\end{array}
\end{displaymath}
\caption{The master lists of the subjects in the {\sc tap-ss} instance constructed in the proof of Lemma \ref{lem2}}\label{t_4}
\end{figure}

We show how to modify the proof of Lemma \ref{lem1} to show that $B$ has a satisfying truth assignment if and only if $J'$ has a stable matching.

Firstly, if $f$ is a satisfying truth assignment of $B$, construct the matching $\MM$ in $J$ as in the proof of Lemma \ref{lem1}.  We then extend $\MM$ to a matching $\MM'$ in $J'$ as follows.  For each applicant $a$ in $J$, add the pair $(a',g(a))$ to $\MM$.  Since $\MM$ is applicant-complete in $J$, it is straightforward to verify that $\MM'$ is stable in $J'$.

Conversely suppose that $\MM'$ is a stable matching in $J'$.  We firstly claim that each applicant $a$ in $J$ is assigned in $\MM'$ to a school better than $g(a)$.  For, suppose $(a,g(a))\in \MM'$.  Then $(a'',g(a))$ blocks $\MM'$, a contradiction.  Now suppose that $a$ is unassigned in $\MM'$. Clearly some applicant is assigned to $g(a)$ in $\MM'$, for otherwise $(a,g(a))$ blocks $\MM'$.  If $(a',g(a))\in \MM'$ then $(a,g(a))$ blocks $\MM'$, whilst if $(a'',g(a))\in \MM'$ then $(a',g(a))$ blocks $\MM'$.  The claim is thus proved.  It is then also straightforward to verify that $(a',g(a))\in \MM'$ for each applicant $a$ in $J$, for otherwise $(a',g(a))$ blocks $\MM'$.

Let $\MM$ be the matching obtained from $\MM'$ by removing all pairs of the form $(a',g(a))$, where $a$ is an applicant in $J$.  It follows by the previous paragraph that $\MM$ is an applicant-complete stable matching in $J$.  The remainder of the proof is then identical to the converse direction of the proof of Lemma \ref{lem1}.
\end{proof}
\begin{thm}
\label{tap-ss-npcomplete}
Given an instance of {\sc tap-ss}, the problem of deciding whether a stable matching exists is NP-complete. This result holds even if there are at most three subjects, each partial capacity of a school is at most 1, the preference lists of the schools are derived from subject-specific master lists of the applicants, and the preference lists of the applicants are derived from a single master list of schools.
\end{thm}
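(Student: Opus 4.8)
The plan is to reuse the reduction from \sat constructed in the proof of Lemma~\ref{lem2}, fixing only one degree of freedom that was left open there: for each clause $c_j$, which of its three literals is assigned to position $1$, which to position $2$, and which to position $3$. From a formula $B$ this produces the {\sc tap-ss} instance $J'$, which already has at most three subjects, all partial capacities at most $1$, and schools' preferences derived from subject-specific master lists of applicants, and which admits a stable matching iff $B$ is satisfiable. Since nothing in the correctness argument of Lemma~\ref{lem2} refers to the choice of position labelling, it suffices to exhibit a single master list of schools $\sigma$ from which every applicant's preference list in $J'$ can be derived; the equivalence ``$B$ satisfiable $\iff$ $J'$ has a stable matching'' then transfers verbatim, and membership in NP is immediate as before.

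To build $\sigma$ I would exploit that each clause school $s_j^t$ with $1\le t\le 3$ occurs among the applicants' lists only as the \emph{middle} entry of the list of the unique applicant $x_{4i+r}$ with $s(x_{4i+r})=s_j^t$, and that list mentions no school outside $\{s_j^t,y_{4i},y_{4i+1},y_{4i+2},y_{4i+3}\}$. Hence the four (distinct) clause schools touched by variable $v_i$'s gadget can all be absorbed into one contiguous block of $\sigma$ built around $y_{4i},\dots,y_{4i+3}$. Explicitly, I would take the block of $v_i$ to be
$$y_{4i},\ s(x_{4i}),\ s(x_{4i+3}),\ y_{4i+1},\ s(x_{4i+1}),\ y_{4i+3},\ s(x_{4i+2}),\ y_{4i+2},$$
and one checks directly that this order obeys every ``sandwich'' relation forced by the four $x$-applicants of $v_i$ --- for instance $y_{4i}\succ s(x_{4i})\succ y_{4i+1}$, $y_{4i+1}\succ s(x_{4i+1})\succ y_{4i+2}$, $y_{4i+3}\succ s(x_{4i+2})\succ y_{4i+2}$, $y_{4i}\succ s(x_{4i+3})\succ y_{4i+3}$ --- while placing the four clause schools of the gadget at four distinct positions. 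The master list $\sigma$ is then the concatenation of these blocks for $i=0,1,\dots,n-1$, followed by the schools $s_j^4$ in any order, followed by the schools $g(a)$ in any order.

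It then remains to check $\sigma$ against the remaining lists. The list $s_j^t,s_j^4,g(w_j^t)$ of $w_j^t$ is a sublist of $\sigma$ because $s_j^t$ lies in a variable block, $s_j^4$ comes after all blocks, and $g(w_j^t)$ comes last. The list $s_j^1,s_j^2,s_j^3,g(q_j)$ of $q_j$ requires $s_j^1,s_j^2,s_j^3$ to occur in $\sigma$ in exactly that order; but these are three distinct schools sitting at three distinct positions of $\sigma$ (in distinct variable blocks, or in distinct slots of one block), hence linearly ordered by $\sigma$, and this is where I would \emph{use} the freedom in the position labelling: assign the literals of $c_j$ to positions $1,2,3$ so as to agree with that linear order --- when the three literals belong to three distinct variables this is just ``sort the literals by the index of their variable''. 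Finally each auxiliary applicant $a',a''$ has only $g(a)$ on its list and is trivially consistent with $\sigma$.

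I do not foresee a genuine obstacle: Lemmas~\ref{lem1} and~\ref{lem2} already carry the hardness, and what is needed here is only the routine but slightly fiddly verification that a single linear ordering of each variable gadget can simultaneously accommodate its four clause schools at the positions dictated by the four $x$-applicants, while leaving $q_j$'s requirement $s_j^1\succ s_j^2\succ s_j^3$ to be met by the appropriate --- and, by Lemma~\ref{lem2}, inconsequential --- choice of position labelling inside each clause.
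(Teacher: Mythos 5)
Your proposal is correct and follows essentially the same route as the paper: a master list of schools consisting of one block per variable gadget (interleaving $y_{4i},\dots,y_{4i+3}$ with the four clause schools $s(x_{4i+r})$ so as to respect every $x$-applicant's three-element list), followed by the schools $s_j^4$ and the schools $g(a)$, with membership in NP and the hardness carried over from Lemmata~\ref{lem1} and~\ref{lem2}. The only (immaterial) difference is how the lists of the applicants $q_j$ are reconciled with the master list: the paper reorders $s_j^1,s_j^2,s_j^3$ within $q_j$'s preference list, whereas you relabel the literal positions inside each clause, which amounts to the same harmless adjustment.
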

\begin{proof}
We consider the reduction given by Lemmata \ref{lem1} and \ref{lem2}, and show that the applicants' preference lists may be derived from a single master list of schools.

For each $i$ ($0\leq i\leq n-1$), let $\langle S_i\rangle$ denote the sequence
$$\langle y_{4i},s(x_{4i}),s(x_{4i+3}),y_{4i+3},s(x_{4i+2}),y_{4i+1},s(x_{4i+1}),y_{4i+2}\rangle.$$
Let $S^4=\{s_j^4 : 1\leq j\leq m\}$ and let $\langle S^4\rangle$ denote an arbitrary order of the schools in $S^4$.  Let $G$ denote the set of schools of the form $g(a)$ as introduced in the proof of Lemma \ref{lem2} for each applicant $a$ in the original {\sc tap-ss} instance as constructed in the proof of Lemma \ref{lem1}.  Let $\langle G\rangle$ denote an arbitrary order of the schools in $G$.  Define the following master list of schools:
$$\langle S_0\rangle ~~ \langle S_1\rangle ~~ \dots ~~ \langle S_{n-1}\rangle ~~ \langle S^4\rangle ~~ \langle G\rangle$$

In the proof of Lemma \ref{lem1}, let the preference list of each $q_j$ ($1\leq j\leq m$) be reordered such that the relative ordering of the three schools $s_j^1$, $s_j^2$ and $s_j^3$ is derived from the above master list.  This does not change the remainder of the proof of Lemma \ref{lem1}, nor the proof of Lemma \ref{lem2}.  Moreover every other applicant's preference list is derived from the above master list of schools.  The theorem then follows.
\end{proof}

\section{``Most stable'' matchings}
\label{sec:moststable}
Given an instance of {\sc tap}, we have already seen that a stable matching need not exist.  In such cases it is natural to seek a matching that is ``as stable as possible'' in a precise sense.  Here we regard such ``most stable'' matchings as those that admit the minimum number of blocking pairs.  Note that this approach was also considered in \cite{ABM06}.

Clearly {\sc min bp tap}, the problem of finding such a matching in an instance of {\sc tap}, is NP-hard, given Theorem \ref{thm1}.  It is then natural to consider the approximability of this problem.  In this section we prove that the problem of finding a matching with the minimum number of blocking pairs, given an instance of {\sc tap}, is NP-hard and not approximable within $n^{1-\varepsilon}$, for any $\varepsilon>0$, unless P=NP, where $n$ is the number of agents (applicants plus schools).  We show that similar observations hold for {\sc min bp tap-ss}, the problem of finding a matching with the minimum number of blocking pairs in an instance of {\sc tap-ss}.

We begin by stating a more general result.  Let {\sc p} be any stable matching problem for which deciding whether a stable matching exists is NP-complete. Define {\sc min bp p} to be the problem of finding a matching with the minimum number of blocking pairs, given an instance $I$ of {\sc p}.  We let $opt(I)$ denote 1 plus the minimum number of blocking pairs admitted by any matching in $I$.  (We then use $opt(I)$ as the measure against which any approximation algorithm for {\sc p} performs, since adding 1 ensures that $opt(I)\geq 1$.)  We now present a general inapproximability result for {\sc min bp p}, and later use it to derive similar inapproximability results for {\sc min bp tap} and {\sc min bp tap-ss}.

\begin{thm}
\label{GENERAL-APPROX-NPCOMPLETE}
{\sc min bp p} is not approximable within $n^{1- \varepsilon}$, where $n$ is the number of agents in a given instance, for any $\varepsilon > 0$ unless P=NP.
\end{thm}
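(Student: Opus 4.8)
# Proof Proposal for Theorem \ref{GENERAL-APPROX-NPCOMPLETE}

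The plan is to use a gap-introducing reduction combined with an amplification (``blow-up'') trick, which is the standard route to $n^{1-\varepsilon}$ inapproximability results for minimization problems whose decision version is NP-complete. Let $I$ be an instance of {\sc p} on which we wish to decide solvability. The key observation is that $opt(I) = 1$ precisely when $I$ admits a stable matching, and $opt(I) \geq 2$ otherwise; thus even a very weak approximation algorithm for {\sc min bp p} would let us distinguish the two cases, giving a polynomial-time algorithm for an NP-complete problem. To push the hardness up to $n^{1-\varepsilon}$, I would take many disjoint copies of $I$ and combine them so that the ``no'' instances accumulate a number of blocking pairs that grows polynomially in the total size, while ``yes'' instances still admit a perfectly stable (zero blocking pair) matching.

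Concretely, first I would fix $\varepsilon > 0$ and choose an integer constant $k = k(\varepsilon)$ large enough (e.g.\ $k > 1/\varepsilon$). Given an instance $I$ of {\sc p} with, say, $N$ agents, form the instance $I^*$ consisting of $t = N^{k}$ pairwise disjoint copies $I_1, \dots, I_t$ of $I$ (agents in distinct copies find each other unacceptable, so no blocking pair can span two copies). Then the number of agents in $I^*$ is $n = tN = N^{k+1}$. If $I$ admits a stable matching, then taking a stable matching in each copy yields a matching of $I^*$ with zero blocking pairs, so $opt(I^*) = 1$. If $I$ admits no stable matching, then every matching of $I^*$ has at least one blocking pair in each of the $t$ copies, hence at least $t = N^{k}$ blocking pairs, so $opt(I^*) \geq N^{k} + 1 > N^k$.

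Now suppose, for contradiction, that there is a polynomial-time algorithm $\mathcal{A}$ approximating {\sc min bp p} within $n^{1-\varepsilon}$. Applied to $I^*$, if $I$ is a ``no'' instance then $\mathcal{A}$ returns a matching with at least $opt(I^*) > N^k$ blocking pairs; if $I$ is a ``yes'' instance it returns a matching with at most $n^{1-\varepsilon} \cdot opt(I^*) = n^{1-\varepsilon}$ blocking pairs. The main technical step is to check that for $k$ chosen appropriately we have $n^{1-\varepsilon} < N^k$, so the two cases are separated by a threshold computable in polynomial time. Since $n = N^{k+1}$, we have $n^{1-\varepsilon} = N^{(k+1)(1-\varepsilon)}$, and the requirement $(k+1)(1-\varepsilon) < k$, i.e.\ $1 - \varepsilon < k/(k+1) = 1 - 1/(k+1)$, i.e.\ $1/(k+1) < \varepsilon$, holds for all $k > 1/\varepsilon - 1$; choosing such a $k$ does it. (One should also handle trivially small $N$ separately, where the inequality is vacuous, and absorb the $+1$ in $opt$ into the constants — routine.) Counting blocking pairs of a given matching is polynomial-time for any reasonable {\sc p}, so running $\mathcal{A}$ on $I^*$ and comparing the count of blocking pairs of the output matching against the threshold decides whether $I$ is solvable, in polynomial time; this forces P $=$ NP.

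The step I expect to require the most care is the bookkeeping in the blow-up: ensuring (a) that the disjoint-copies construction genuinely prevents inter-copy blocking pairs under whatever the ``blocking pair'' definition of {\sc p} is (this is automatic if non-acceptability is allowed, which it is for {\sc tap} and {\sc tap-ss}), and (b) that the arithmetic relating $n$, $t$, $N$, $\varepsilon$ and $k$ is tight enough that the ``yes'' upper bound $n^{1-\varepsilon}\cdot opt(I^*)$ stays strictly below the ``no'' lower bound for all sufficiently large instances, with the finitely many small instances dispatched by brute force. Everything else — membership bookkeeping, polynomial-time verification of blocking-pair counts — is straightforward.
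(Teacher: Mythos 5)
Your proposal is correct and follows essentially the same route as the paper: both build a blown-up instance from polynomially many disjoint copies of $I$ so that a yes-instance keeps $opt=1$ while a no-instance forces at least one blocking pair per copy, and then verify the arithmetic showing $n^{1-\varepsilon}$ falls below the number of copies, so an approximation algorithm would decide stability. The only differences are in the choice of constants (the paper takes $n_0^{\lceil 2/\varepsilon\rceil}$ copies, you take $N^{k}$ with $k>1/\varepsilon-1$) and minor bookkeeping around the ``$+1$'' in the measure, neither of which affects correctness.
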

\begin{proof}
Let $n_0$ be the number of agents in an instance $I$ of {\sc p}. Choose $c = \lceil 2 / \varepsilon \rceil$ and $k= n_0^c$. Now, let $I_1 , I_2 , \ldots , I_k$ be $k$ disjoint copies of the instance $I$. Let $I^{\prime}$ be the instance of {\sc p} formed by taking the union of the sub-instances $I_1 , I_2 , \ldots , I_k$. Let $n = kn_0 $ denote the number of agents in $I^{\prime}$.

Clearly if $I$ admits a stable matching then each copy of $I$ must admit a stable matching. Hence $opt(I^{\prime }) = 1$. However, if $I$ admits no stable matching, then each $I_r ~ (1\leq r\leq k)$ admits only matchings with one or more blocking pair, and hence any matching admitted by $I^{\prime}$ must admit $k$ or more blocking pairs. Hence $opt(I^{\prime }) \geq k+1$. We now show that $n^{1- \varepsilon} \leq k$.

Firstly $n = k n_0 \leq 2k n_0 = 2 n_0^{c+1}$. Hence

$$\dfrac{n} {2} \leq n_0^{c+1}$$
which implies 
$$\left (\dfrac{n}{2}\right )  ^{1/(c+1)} \leq n_0.$$
Since $k = n_0^c$ it follows that 
$$\left (\dfrac{n}{2}\right)^{c/(c+1)} \leq k$$
and hence 

\begin{equation} \label{GeneralInapp-Eq1} \displaystyle 2^{-c/(c+1)}n^{c/(c+1)} \leq k. \end{equation}

We know that $n = k n_0 = n_0^{c+1} \geq n_0^c$ and we lose no generality by assuming that $n_0 \geq 2$. Hence $n \geq 2^c$ and it follows that $n^{-1} \leq 2^{-c}$
and thus 
\begin{equation} \label{GeneralInapp-Eq2} \displaystyle n^{-1/(c+1)} \leq 2^{-c/(c+1)}. \end{equation}
Hence 
\begin{equation} \label{GeneralInapp-Eq3} \displaystyle n^{-1/(c+1)} n^{c/c+1} \leq 2^{-c/(c+1)}n^{c/(c+1)} \end{equation}
and it follows that 
\begin{equation} \label{GeneralInapp-Eq4} n^{(c-1)/(c+1)} = n^{c/(c+1)} n^{-1/(c+1)} \leq 2^{-c/(c+1)} n^{c/(c+1)} \leq k. \end{equation}

\noindent We now show that $n^{1- \varepsilon} \leq n^{(c-1)/(c+1)}$. Observe that $c \geq 2 / \varepsilon$ and thus $c + 1 \geq 2 / \varepsilon$. Hence 
$$1 - \varepsilon \leq 1 - \dfrac{2}{c+1}\leq  \dfrac{c-1}{c+1}$$
and hence by Inequality \ref{GeneralInapp-Eq4}, $n^{1- \varepsilon} \leq k$.

Now, assume that $X$ is an approximation algorithm for {\sc p} with a performance guarantee of $n^{1 - \varepsilon}$. If $I$ admits a stable matching, $X$ must return a matching in $I^{\prime }$ with measure at most $opt(I).n^{1 - \varepsilon}=n^{1 - \varepsilon}\leq k$, since $opt(I)=1$. Otherwise, $I^{\prime}$ does not admit a stable matching and, as shown above, $X$ must return a matching with measure $\geq k + 1$. Thus algorithm $X$ may be used to determine whether $I$ admits a stable matching in polynomial time, a contradiction. Hence, no such polynomial approximation algorithm can exist unless P = NP.
\end{proof}

Theorem \ref{GENERAL-APPROX-NPCOMPLETE} allows us to state the following corollaries for the inapproximability of {\sc min bp tap} and {\sc min bp tap-ss}. 

\begin{cor}
\label{minbptap_innapproximabilityresult1}
{\sc min bp tap} is not approximable within $n^{1- \varepsilon}$, where $n$ is the number of agents in a given instance, for any $\varepsilon > 0$ unless P=NP. This result holds even if there are at most three subjects, each partial capacity of a school is at most 2, and the preference list of each applicant is of length at most 3.
\end{cor}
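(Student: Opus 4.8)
The plan is to invoke Theorem \ref{GENERAL-APPROX-NPCOMPLETE} with the generic problem {\sc p} instantiated as {\sc tap}. By Theorem \ref{thm1}, deciding whether a {\sc tap} instance admits a stable matching is NP-complete, so {\sc tap} satisfies the hypothesis required of {\sc p}, and Theorem \ref{GENERAL-APPROX-NPCOMPLETE} immediately yields that {\sc min bp tap} is not approximable within $n^{1-\varepsilon}$, for any $\varepsilon>0$, unless P$=$NP, where $n$ is the number of agents.

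The only additional work is to check that the restricted parameters survive the padding construction used in the proof of Theorem \ref{GENERAL-APPROX-NPCOMPLETE}. Recall that there the hard instance $I'$ is simply a disjoint union of $k$ copies of a {\sc tap} instance $I$, and for the corollary we take $I$ to be an instance output by the reduction of Theorem \ref{thm1}, which uses only the three subjects $F$, $I$ and $M$, has every partial capacity at most $2$, and has every applicant preference list of length at most $3$. Since the subject set is global, all $k$ copies share the subjects $F$, $I$, $M$, so $I'$ still has at most three subjects; each school of $I'$ is an isomorphic copy of a school of $I$, so its partial capacities are unchanged and remain at most $2$; and each applicant of $I'$ inherits the preference list of the corresponding applicant of $I$, so its length stays at most $3$. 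Hence the inapproximability instances obey all the claimed restrictions.

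I expect no real obstacle here: the substantive ingredients are already established in Theorems \ref{GENERAL-APPROX-NPCOMPLETE} and \ref{thm1}, and the proof of the corollary is essentially the bookkeeping observation that disjoint union preserves ``number of subjects'', ``maximum partial capacity'' and ``maximum preference-list length'', so the restricted-parameter form of Theorem \ref{thm1} feeds directly into Theorem \ref{GENERAL-APPROX-NPCOMPLETE}.
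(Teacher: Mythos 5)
Your proposal is correct and matches the paper's argument, which simply cites Theorems \ref{thm1} and \ref{GENERAL-APPROX-NPCOMPLETE}; your extra check that the disjoint-union padding preserves the number of subjects, the partial-capacity bound and the preference-list lengths is a sound (and slightly more explicit) version of what the paper leaves implicit.
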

\begin{proof}
This result follows immediately from Theorems \ref{thm1} and \ref{GENERAL-APPROX-NPCOMPLETE}.
\end{proof}

\begin{cor}
{\sc min bp tap-ss} is not approximable within $n^{1- \varepsilon}$, where $n$ is the number of agents in a given instance, for any $\varepsilon > 0$ unless P=NP. This result holds even if there are at most three subjects, each partial capacity of a school is at most 1, the preference lists of the schools are derived from subject-specific master lists of the applicants, and the preference lists of the applicants are derived from a single master list of schools.
\end{cor}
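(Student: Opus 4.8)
The plan is to obtain the result as a direct consequence of Theorem~\ref{GENERAL-APPROX-NPCOMPLETE} and Theorem~\ref{tap-ss-npcomplete}, taking for {\sc p} the restriction of {\sc tap-ss} to instances in which there are at most three subjects, every partial capacity is at most $1$, the schools' preference lists are derived from subject-specific master lists of applicants, and the applicants' preference lists are derived from a single master list of schools. By Theorem~\ref{tap-ss-npcomplete}, deciding whether such an instance admits a stable matching is NP-complete, so the hypothesis of Theorem~\ref{GENERAL-APPROX-NPCOMPLETE} is satisfied; the inapproximability bound $n^{1-\varepsilon}$ then follows once we check that the disjoint-copies reduction used inside the proof of Theorem~\ref{GENERAL-APPROX-NPCOMPLETE} keeps us within this restricted class.

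First I would verify that the disjoint-union instance $I'$ built from $k$ copies $I_1,\dots,I_k$ of a restricted {\sc tap-ss} instance $I$ is itself a restricted {\sc tap-ss} instance. All copies can share the same three subjects $F$, $I$ and $M$ (we must \emph{not} make the subjects copy-disjoint, or the subject count would blow up to $3k$); the partial capacities are inherited copy by copy and so remain at most $1$; and since an applicant in copy $I_r$ finds acceptable only schools of $I_r$, the acceptable pairs of $I'$ are the disjoint union of those of the copies. I would then form, for each subject $p\in\{F,I,M\}$, the global subject-specific master list by concatenating the per-copy master lists for $p$ in the fixed order of the copies; because each school of $I'$ ranks only applicants of its own copy, its list for $p$ is exactly the restriction of this global list to the applicants it ranks, so $I'$ still has subject-specific master lists. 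The same concatenation applied to the per-copy master lists of schools yields a single master list of schools for $I'$ from which every applicant's preference list is derived.

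Next I would observe that, since no acceptable applicant--school pair spans two distinct copies, a pair $(a,s)$ can block a matching $\MM$ in $I'$ only if $a$ and $s$ lie in the same copy $I_r$, and then $(a,s)$ blocks $\MM$ precisely when it blocks the restriction of $\MM$ to $I_r$. Hence the number of blocking pairs of $\MM$ in $I'$ is the sum over $r$ of the numbers of blocking pairs of $\MM$ restricted to $I_r$ --- exactly the property exploited in the proof of Theorem~\ref{GENERAL-APPROX-NPCOMPLETE}: if $I$ admits a stable matching then $opt(I')=1$, while otherwise every matching in $I'$ has at least $k$ blocking pairs, so $opt(I')\geq k+1$. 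With $n$ the number of agents (applicants plus schools) of $I'$ and $c,k$ chosen as in that proof, the same arithmetic gives $n^{1-\varepsilon}\leq k$, and an approximation algorithm with guarantee $n^{1-\varepsilon}$ for the restricted {\sc min bp tap-ss} would decide stable solvability of $I$ in polynomial time, contradicting Theorem~\ref{tap-ss-npcomplete} unless P~=~NP.

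The only point requiring real care --- and the step I expect to be the main (though mild) obstacle --- is the master-list bookkeeping: one must confirm that concatenating the per-copy lists reproduces each agent's original preference list as an order-preserving sublist, so that $I'$ genuinely lies in the restricted class and not merely in unrestricted {\sc tap-ss}. This is exactly what is needed in order to invoke Theorem~\ref{tap-ss-npcomplete} (rather than only the weaker hypotheses of Theorem~\ref{thm1}) for the final contradiction; everything else is a verbatim application of Theorem~\ref{GENERAL-APPROX-NPCOMPLETE}.
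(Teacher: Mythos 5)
Your proposal is correct and follows essentially the same route as the paper, which simply states that the corollary follows immediately from Theorems \ref{tap-ss-npcomplete} and \ref{GENERAL-APPROX-NPCOMPLETE}. Your extra verification that the disjoint-copies construction stays within the restricted class (shared three subjects, capacities at most $1$, concatenated subject-specific master lists of applicants and a concatenated master list of schools, with blocking pairs confined to single copies) is exactly the bookkeeping the paper leaves implicit, and it is carried out correctly.
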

\begin{proof}
This result follows immediately from Theorems \ref{tap-ss-npcomplete} and \ref{GENERAL-APPROX-NPCOMPLETE}.
\end{proof}

\section{Conclusion}\label{s_conclusion}
In this paper we studied the notion of stability in the Teachers Assignment Problem from the computational perspective. Notice that when there are only two different subjects in total, then all the applicants are of the same type and the problem reduces to the classical Hospitals / Residents problem. However, the presence of just three subjects makes the problem of deciding the existence of a stable matching intractable, unless there is a master list of applicants or  a master list of schools. By contrast, if the schools may order applicants differently in each of their specialization subjects, then even subject-specific master lists plus the master list of schools do not make the problem tractable. Thus we have been able to characterize the borderline between the polynomially solvable and intractable cases of the Teachers Assignment Problem.   

Further, we proved a general inapproximability result for the problem of finding a matching with the minimum number of blocking pairs. This implies that, if $P\ne NP$, there is no polynomial approximation algorithm with performance guarantee $n^{1-\varepsilon}$ for finding a matching with the minimum number of blocking pairs in both preference variants of the Teachers Assignment Problem.

\end{document}